\newtheorem{theorem}{Theorem}
\newtheorem{definition}{Definition}
\newtheorem{proposition}{Proposition}
\newtheorem{example}{Example}
\title{Rational Neural Network Controllers}
\author{Matthew Newton \\
	Department of Engineering Science\\
	University of Oxford\\
	Oxford, OX1 3PJ, UK \\
	\texttt{matthew.newton@eng.ox.ac.uk} \\
	\And
	Antonis Papachristodoulou\\
	Department of Engineering Science\\
	University of Oxford\\
	Oxford, OX1 3PJ, UK \\
	\texttt{antonis@eng.ox.ac.uk} \\
}
\date{}
\begin{document}
\maketitle

\begin{abstract} 
Neural networks have shown great success in many machine learning related tasks, due to their ability to act as general function approximators. Recent work has demonstrated the effectiveness of neural networks in control systems (known as neural feedback loops), most notably by using a neural network as a controller. However, one of the big challenges of this approach is that neural networks have been shown to be sensitive to adversarial attacks. This means that, unless they are designed properly, they are not an ideal candidate for controllers due to issues with robustness and uncertainty, which are pivotal aspects of control systems. There has been initial work on robustness to both analyse and design dynamical systems with neural network controllers. However, one prominent issue with these methods is that they use existing neural network 
architectures tailored for traditional machine learning tasks. These structures may not be appropriate for neural network controllers and it is important to consider alternative architectures. This paper considers rational neural networks and presents novel rational activation functions, which can be used effectively in robustness problems for neural feedback loops. Rational activation functions are replaced by a general rational neural network structure, which is convex in the neural network's parameters. A method is proposed to recover a stabilising controller from a Sum of Squares feasibility test. This approach is then applied to a refined rational neural network which is more compatible with Sum of Squares programming. Numerical examples show that this method can successfully recover stabilising rational neural network controllers for neural feedback loops with non-linear plants with noise and parametric uncertainty.
\end{abstract}

\section{Introduction} \label{sec:rationalNNintro}
Neural networks (NNs) have shown to be highly effective in numerous machine learning tasks. Examples of these include but are not limited to: image recognition, weather prediction, natural language processing, autonomous vehicle technology, medical imaging and social media algorithms \cite{czhang,cbis,bboc,tbro,skuu,mraz,tbal}. There have been numerous advancements that have contributed to their success such as the development of modern NN architectures \cite{akriz,resnet}, the increase in computational power available \cite{jsan,njou} and the availability of big data. 

More recently, there has been an increased interest in using NNs in control systems. One reason for this is the emergence of the parallel field of reinforcement learning. By harnessing the power of NNs, deep reinforcement learning has been used to create a decision-making agent to greatly outperform humans in many complex tasks. Such examples include the board game Go \cite{alphago} and the video game Dota 2 \cite{dota2}. Despite their success, there are many significant issues with theses methods. These learnt policies can perform poorly when the learnt environment is different from the real environment \cite{vbeh,aman,jmor,agle}. Additionally, bounds to quantify their safety do not sufficiently describe the performance of the algorithm and can be overly conservative \cite{rsut}. However, with new advancements in robust control and the success of NNs in reinforcement learning, there is a strong motivation for work at their intersection.

We refer to control systems that contain NNs as controllers as neural feedback loops (NFLs). Most research completed in this area has addressed the robustness analysis of NFLs, where the NN's parameters are given and the task is to quantify the system's safety or robustness properties. However, the more challenging task is to obtain the NN controller's parameters, whilst enforcing robustness guarantees. NNs are useful since they can be used as general function approximators \cite{khor,mtel,mles,fefa}, but contain a large number of parameters. This means that optimising over all of the parameters is often computationally expensive. 

One method to design the NN controller is by learning an expert control law using input-output data and then checking the robustness guarantees using analysis methods, \cite{hyin1,ppau2,mnew5}. However, this may lead to poor robustness certificates because no relevant objective is being optimised while training the NN. It is also possible to use reinforcement learning methods. This often involves training the controller by simulating the system trajectories and then updating the controller's parameters subject to maximising a reward function. However, there are significant challenges with this process; it is very computationally intensive, requiring a large amount of hyperparameter tuning and sometimes leads to undesirable behaviour \cite{mever2}. The parameterised NN controller can then be analysed to obtain robustness certificates in a defined operating region, however these guarantees can be poor. It can be difficult for these approaches to outperform traditional control laws. Furthermore, adding additional non-linearities into the model through the NN's activation functions can increase the complexity of the closed-loop system.

Despite these drawbacks, recent methods have focused on addressing these issues by designing NN controllers whilst ensuring robustness guarantees in the process. Methods that focus on developing reinforcement learning algorithms such as \cite{mever4,pdon} are able to create an NN policy which can be combined with robust control guarantees. However, these approaches still suffer from other reinforcement learning issues such as requiring significant computational time and hyperparameter tuning. These drawbacks can be mitigated by instead trying to obtain an NN controller by learning from an optimal model predictive control law. This allows the NN controller to be trained offline and when implemented it can be significantly less computationally expensive than the full model predictive control law. To achieve this an SDP framework that incorporates integral quadratic constraints is presented in \cite{hyin2}. An iterative algorithm is used to alternate between optimising the NN parameters to fit the control law and maximising the region of attraction. However, these approaches require a known model predictive control law to optimise over the controller's input-output data, which may not be easy to obtain. This approach relies on a loop transformation and Schur complement to ensure the optimisation problem is convex. Similar approaches have been used for different NN architectures such as recurrent NNs \cite{fagu} and recurrent equilibrium network controllers \cite{njun}. These methods also include a projected policy gradient algorithm and reinforcement learning to synthesise the controller, instead of requiring an expert control law.

\subsection{Our Contribution}
One shortfall of many recent approaches such as \cite{hyin2,fagu,njun} is that large convex approximations must be made, as the original problem is non-convex. For example, the ReLU and tanh activation functions must be sector bounded in the region $[0,1]$ with no constraints obtained from pre-processing bounds such as Interval Bound Propagation \cite{sgow}. These sector constraints are shown in Figure \ref{fig:sectorconstraints}. This can lead to the acquired robustness certificates being conservative. Another issue with these methods is that they rely on iterative approaches, where the algorithm alternates between optimising the performance of the controller and the stability guarantees. This can make the problem computationally expensive and intractable for large scale systems. They also require that the plant model is linear subject to sector bounded non-linearities. 

\begin{figure}[h]
    \centering
    \begin{subfigure}[b]{0.49\textwidth}
        \centering
        \includegraphics[width=\textwidth]{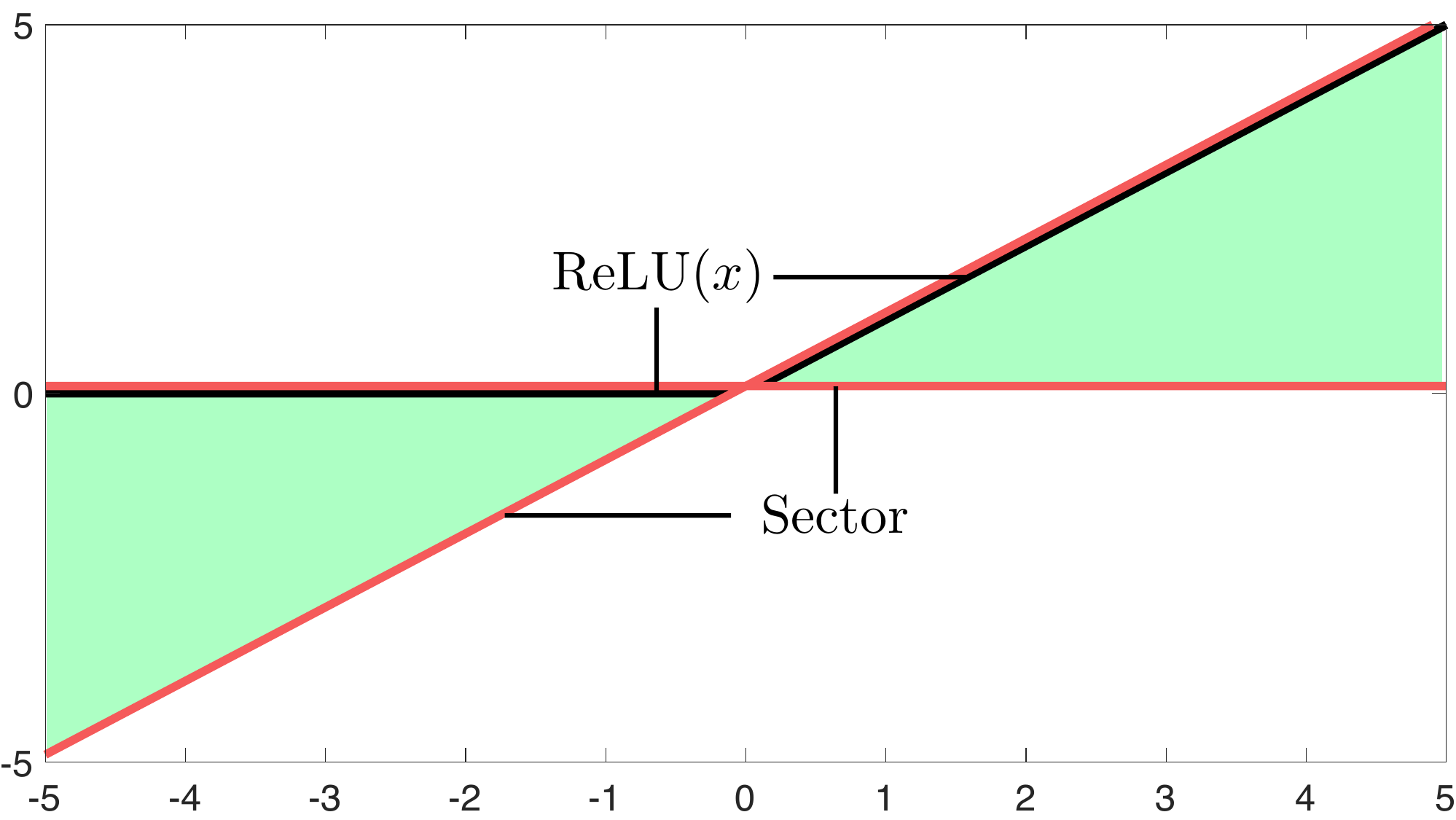}
        \caption{ReLU Activation Function}
        \label{fig:sectorrelu}
    \end{subfigure}
    \hfill
    \begin{subfigure}[b]{0.49\textwidth}  
        \centering 
        \includegraphics[width=\textwidth]{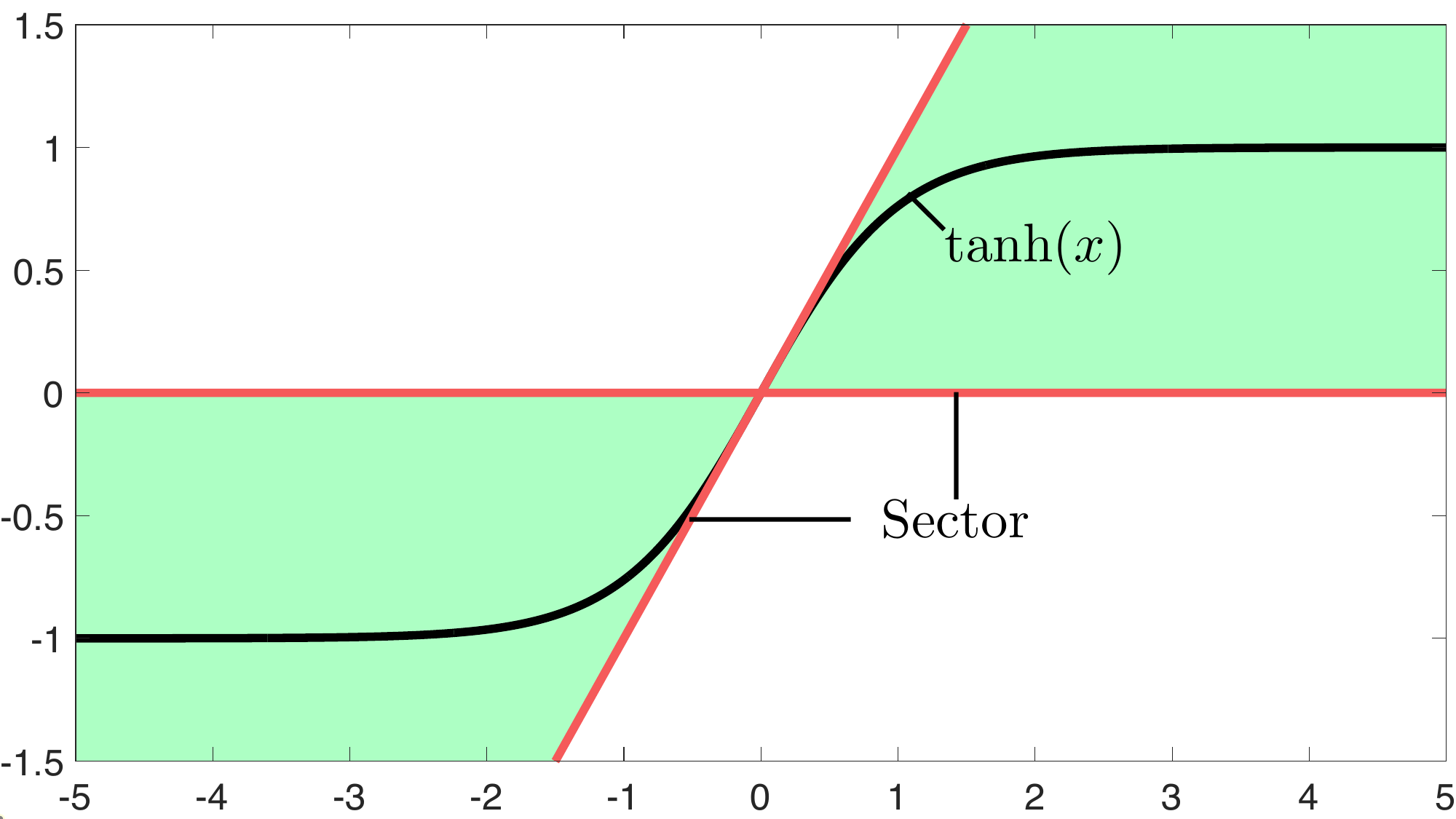}
        \caption{Tanh Activation Function}    
        \label{fig:sectortanh}
    \end{subfigure}
    \caption{Sector constraints that are used for common activation functions when designing neural network controllers.}
    \label{fig:sectorconstraints}
\end{figure}

A root cause of the problems with recent approaches is that they use traditional NN architectures with ReLU, sigmoid and tanh activation functions. These structures have shown to be very effective in many machine learning tasks, however they may not be preferable for use in control systems. By considering a class of NNs that are better aligned with control system techniques, these large convex approximations could be mitigated. Since Sum of Squares (SOS) programming is effective for systems with polynomial and rational functions, we investigate what happens when the NN is built upon them.

\begin{itemize}
    \item Motivated by the success of rational neural networks in machine learning tasks, we propose novel rational activation functions to approximate the traditional sigmoid and tanh activation functions. We then show their effectiveness in analysing neural feedback loops that contain these rational activation functions. 
    \item We note that we could be falling short of the potential expressivity of the neural network by using fixed activation functions. Therefore we consider a general neural network structure which is built upon equations that are convex in the design parameters. We show that a neural network of this form can be expressed in a similar way to that of a feed forward neural network with rational activation functions.
    \item We consider the Lyapunov stability criteria for constrained dynamics systems. We then propose a novel convex Sum of Squares procedure to recover a stabilising controller for a non-linear system through solving a Sum of Squares feasibility test.
    \item This procedure is extended to the generalised rational neural network architecture, to recover a stabilising rational neural network in a convex way. We then adapt the rational neural network architecture to make it more compatible with Sum of Squares programming and allow the stabilising controller to be recovered.
    \item We show for numerous examples that our proposed procedure and neural network architecture are able to effectively recover stabilising controllers for unstable systems with non-linear plants, noise and parametric uncertainty.
\end{itemize}

\section{Preliminaries}

\subsection{Sum of Squares Programming and The Positivstellensatz} \label{sec:sosprogpsatz}
In this section we outline how to formulate and solve SOS optimisation problems by solving an equivalent Semidefinite Program (SDP). For more details on SOS programming the reader is referred to \cite{apap1,apap2,aahm1,yzhe3}. 

The fundamental idea behind SOS is to replace a polynomial positivity condition with a condition that enforces that the polynomial is a Sum of Squares. 
\begin{definition} \label{def:sos}
    A polynomial $p(x)$ is said to be a \emph{Sum of Squares (SOS)} polynomial if it can be expressed as 
    \begin{equation*}
        p(x) = \sum_{i=1}r_{i}^{2}(x), 
    \end{equation*}
    where $r_i(x) \in \mathbb{R}[x]$. We denote the set of polynomials that admit this decomposition by $\Sigma[x]$ and say `$p(x)$ is SOS'. Note that $\mathbb{R}[x_{1}, \dots , x_{n}]$ is defined as the set of polynomials in $x_1, \ldots, x_n$ with real coefficients and we denote $x = (x_1, \ldots, x_n)$ for simplicity. 
\end{definition}
\begin{definition}
    A \emph{monomial} in $x = (x_{1}, \dots x_{n})$ is  
    \begin{equation*}
        x^{\beta} = x_{1}^{\beta_{1}}x_{2}^{\beta_{2}} \dots x_{n}^{\beta_{n}},
    \end{equation*}
    where the exponent and degree are denoted as $\beta = (\beta_{1}, \dots , \beta_{n}) \in \mathbb{N}^{n}$ and $|\beta| = \beta_{1} + \dots + \beta_{n}$ respectively.
\end{definition}
The column vector of monomials with only certain exponents is expressed as $x^{\mathbb{B}} = (x^{\beta})_{\beta \in \mathbb{B}}$, where $\mathbb{B} \subset \mathbb{N}^{n}$ is the set of exponents that are used in the monomials. The summation operation on $\mathbb{B}$ is defined as 
\begin{equation*}
    \mathbb{B} + \mathbb{B} := \{ \beta + \gamma \: : \: \beta, \gamma \in \mathbb{B} \}.
\end{equation*}
A polynomial $p(x)$ can be written as a linear combination of a set of monomials $x^{\mathbb{B}}$ for a set of coefficients $p_{\beta} \in \mathbb{R}$ such that
\begin{equation*}
     p = \sum_{\beta \in \mathbb{N}_{d}^{n}}p_{\beta} x^{\beta},
\end{equation*}
where $\mathbb{N}_{d}^{n} = \{ \beta \in \mathbb{N}^{n} \: : \: |\beta| \leq d \}$ is the set of all $n$-variate exponents of degree $d$ or less.
\begin{theorem}
A polynomial $p(x)$ is SOS if and only if it can be written in what is referred to as a \emph{Gram matrix representation} such that 
\begin{equation*}
    p(x) = (x^{\mathbb{B}})^{T} Q x^{\mathbb{B}},
\end{equation*} 
where $Q \in \mathbb{S}_{+}^{|\mathbb{B}|}$ is a positive semidefinite matrix.
\end{theorem}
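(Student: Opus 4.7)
The plan is to prove both directions of the biconditional by translating between the polynomial decomposition in Definition~\ref{def:sos} and the quadratic form in the monomial basis $x^{\mathbb{B}}$. The underlying idea is that squaring a linear combination of monomials is the same as forming a rank-one PSD quadratic form in those monomials, and a sum of such squares corresponds to a sum of rank-one PSD matrices, which spans exactly the PSD cone.

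For the forward direction, assume $p(x) = \sum_{i} r_i^2(x)$ with each $r_i \in \mathbb{R}[x]$. First I would fix a monomial basis $x^{\mathbb{B}}$ rich enough to express every $r_i$; the natural choice is to take $\mathbb{B}$ to contain all exponents of degree at most $\lceil \deg(p)/2 \rceil$, so that each $r_i$ can be written as $r_i(x) = c_i^{\,T} x^{\mathbb{B}}$ for some coefficient vector $c_i \in \mathbb{R}^{|\mathbb{B}|}$. Then $r_i^2(x) = (x^{\mathbb{B}})^{T} (c_i c_i^{T}) x^{\mathbb{B}}$, and summing over $i$ gives $p(x) = (x^{\mathbb{B}})^{T} Q\, x^{\mathbb{B}}$ with $Q = \sum_i c_i c_i^{T}$. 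Since each $c_i c_i^{T}$ is PSD of rank one, $Q \in \mathbb{S}_+^{|\mathbb{B}|}$.

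For the reverse direction, suppose $p(x) = (x^{\mathbb{B}})^{T} Q\, x^{\mathbb{B}}$ with $Q \succeq 0$. Because $Q$ is symmetric positive semidefinite, it admits a factorisation $Q = L^{T} L$ (via Cholesky or an eigendecomposition $Q = U \Lambda U^{T}$ taking $L = \Lambda^{1/2} U^{T}$). Writing the rows of $L$ as $\ell_i^{T}$ and defining $r_i(x) := \ell_i^{T} x^{\mathbb{B}} \in \mathbb{R}[x]$, I would compute $p(x) = (L x^{\mathbb{B}})^{T}(L x^{\mathbb{B}}) = \sum_i (\ell_i^{T} x^{\mathbb{B}})^{2} = \sum_i r_i^2(x)$, which is exactly the SOS decomposition demanded by Definition~\ref{def:sos}.

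The only subtlety — and the one place where care is needed — is the choice of the monomial set $\mathbb{B}$ in the forward direction. If $\mathbb{B}$ is chosen too small, the $r_i$ cannot be represented as $c_i^{T} x^{\mathbb{B}}$ and the argument collapses; if $\mathbb{B}$ is chosen too large, the construction still works but the Gram matrix representation is nonunique. The standard remedy, which I would adopt, is to fix $\mathbb{B}$ up front as the set of exponents of degree at most $\lceil \deg(p)/2 \rceil$ (or any superset thereof), which is also the basis implicitly assumed in the statement. Everything else is routine linear algebra, and the rank-one decomposition $Q = \sum_i c_i c_i^{T}$ matches the spectral factorisation $Q = L^{T} L$ used in the converse, so the two directions are in fact mirror images of the same identity.
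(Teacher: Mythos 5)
The paper states this theorem without proof---it is the classical Gram-matrix characterisation of SOS polynomials---so there is no in-paper argument to compare against; your two-directional proof is the standard one and is essentially correct. The forward direction via rank-one outer products $Q=\sum_i c_ic_i^{T}$ and the converse via a factorisation $Q=L^{T}L$ are exactly the canonical argument. The one step you assert without justification is that every $r_i$ in a decomposition $p=\sum_i r_i^2$ can be expressed in the monomial basis of degree at most $\lceil \deg(p)/2\rceil$. This is true, but it requires the observation that the top-degree homogeneous parts of the $r_i$ cannot cancel: if $d=\max_i \deg(r_i)$, the degree-$2d$ part of $p$ equals $\sum_i \big(r_i^{(d)}\big)^2$, a sum of squares of real forms, which vanishes only if every $r_i^{(d)}$ vanishes; hence $\deg(p)=2d$ (so $\deg(p)$ is even) and $\deg(r_i)\le \deg(p)/2$. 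You flag the choice of $\mathbb{B}$ as the delicate point but do not supply this argument. Alternatively, since the theorem quantifies over the exponent set $\mathbb{B}$ existentially, you can sidestep the issue entirely by taking $\mathbb{B}$ to be the union of the supports of the $r_i$, after which both directions go through exactly as you wrote them.
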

The Gram matrix representation can be rewritten as
\begin{equation*}
    (x^{\mathbb{B}})^{T} Q x^{\mathbb{B}} = \langle Q, x^{\mathbb{B}} (x^{\mathbb{B}})^{T} \rangle = \sum_{\alpha = \mathbb{B} + \mathbb{B}} \langle Q, A_{\alpha} \rangle x^{\alpha},
\end{equation*}
where the symmetric binary matrix $A_{\alpha} \in \mathbb{S}^{|\mathbb{B}|}$ for each exponent $\alpha \in \mathbb{B} + \mathbb{B}$ is
\begin{equation*}
    [A_{\alpha}]_{\beta,\gamma} := \begin{cases}
    1, \: \beta + \gamma = \alpha, \\
    0, \: \mathrm{otherwise}.
    \end{cases}
\end{equation*}

The following is therefore true
\begin{equation*}
    p(x) \in \Sigma[x] \Leftrightarrow 
    \exists Q \in \mathbb{S}_{+}^{|\mathbb{B}|} \: \text{where} \:  \langle Q, A_{\alpha} \rangle =p_{\alpha}, \: \forall \alpha \in \mathbb{B} + \mathbb{B}.
\end{equation*}
This means that checking the SOS condition is equivalent to solving an SDP, which can be achieved with parsers such as SOSTOOLS \cite{sostools} in MATLAB. 

A central theorem of real algebraic geometry is known as the Positivstellensatz (Psatz) \cite{gsten1}, which will now be briefly outlined. The Psatz provides an equivalent relation between an algebraic condition and the emptiness of a semi-algebraic set.

We express the semi-algebraic set with notation
\begin{equation} 
    S = \big\{ x \in \mathbb{R}^{n} \: | \:  g_{i}(x) \geq 0, \: h_{j}(x) = 0, \: \forall \: i = 1, \dots, q_1, \: j = 1, \dots , q_2 \big\}, \label{Sset}
\end{equation}
where $g_{i}$ and $h_{j}$ are polynomial functions. 
\begin{definition} \label{def:monoid}
    Given $f_{1}, \dots , f_{r} \in \mathbb{R}[x]$, the \emph{multiplicative monoid} generated by the $f_{k}$'s is the set of all finite products of $f_{k}$'s, including 1 (i.e. the empty product). It is denoted as $\mathcal{M}(f_{1}, \dots , f_{r})$.
\end{definition}
\begin{definition} \label{def:cone}
    Given $g_{1}, \dots , g_{q_1} \in \mathbb{R}[x]$, the \emph{cone} generated by the $g_{i}$'s is 
    \begin{equation} \label{eq:cone}
        \mathrm{cone}\{g_{1}, \dots , g_{q_1} \} =  \Bigg\{ s_{0} + \sum_{i=1}^{q_1} s_{i}G_{i} \: | \:  s_{i} \in \Sigma [x], \: G_{i} \in \mathcal{M}(g_{1}, \dots , g_{q_1}) \Bigg\}.
    \end{equation}
\end{definition}
\begin{definition} \label{def:ideal}
    Given $h_{1}, \dots , h_{q_2} \in \mathbb{R}[x]$, the \emph{ideal} generated by the $h_{k}$'s is
    \begin{equation} \label{eq:ideal}
        \mathrm{ideal}\{h_{1}, \dots , h_{q_2} \} = \Bigg\{ \sum_{j=1}^{q_2} t_{j}h_{j} \: | \: t_{j} \in \mathbb{R}[x] \Bigg\}.
    \end{equation}
\end{definition}
\begin{theorem} (Positivstellensatz, \cite{gsten1}) \label{psatz1}
    Given the semi-algebraic set $S$ in \eqref{Sset}, the following are equivalent:
    \begin{enumerate}
        \item The set $S$ is empty.
        \item There exist $s_i \in \Sigma[x]$ in \eqref{eq:cone} and $t_j \in \mathbb{R}[x]$ in \eqref{eq:ideal} such that \[\mathrm{cone}\{g_{1}, \dots , g_{q_1} \} + \mathrm{ideal} \{h_{1}, \dots , h_{q_2} \} = 0.\]
    \end{enumerate}
\end{theorem}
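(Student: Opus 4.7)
The plan is to prove the equivalence by establishing both implications separately, with (2)$\Rightarrow$(1) being an elementary algebraic argument and (1)$\Rightarrow$(2) requiring the deeper machinery of real algebraic geometry. The overall strategy mirrors the classical proofs of the Positivstellensatz due to Krivine and Stengle.

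For the direction (2)$\Rightarrow$(1), I would argue by contradiction. Suppose the certificate $p + q = 0$ holds, where $p \in \mathrm{cone}\{g_i\}$ (understood to carry a positive constant obstruction, e.g.\ absorbed into $s_0$) and $q \in \mathrm{ideal}\{h_j\}$, and assume there were some $x^{\star} \in S$. Each summand of $p$ evaluated at $x^{\star}$ has the form $s_i(x^{\star}) G_i(x^{\star})$; since $s_i \in \Sigma[x]$ gives $s_i(x^{\star}) \geq 0$ and each factor $g_k(x^{\star}) \geq 0$ in the monoid element yields $G_i(x^{\star}) \geq 0$, the full cone contribution is nonnegative. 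Each $h_j(x^{\star}) = 0$ forces the ideal contribution to vanish, so the identity collapses at $x^{\star}$ to a strictly positive quantity equalling zero, a contradiction. Hence $S$ is empty.

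For the direction (1)$\Rightarrow$(2), the plan is to argue the contrapositive: assuming no certificate exists, construct a point in $S$. Set $C := \mathrm{cone}\{g_i\} + \mathrm{ideal}\{h_j\}$ and suppose $-1 \notin C$. I would apply Zorn's lemma to the collection of cones in $\mathbb{R}[x]$ extending $C$ while still avoiding $-1$, producing a maximal such cone $P$. Standard arguments in real algebra show that maximality forces the dichotomy $P \cup (-P) = \mathbb{R}[x]$ and that $I := P \cap (-P)$ is a prime ideal containing each $h_j$, so $P$ induces a total ordering on the integral domain $\mathbb{R}[x]/I$ in which each $g_i$ becomes nonnegative. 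Passing to the field of fractions, and then to its real closure $R$, the image of the coordinate tuple $(x_1, \ldots, x_n)$ is a point $\bar{x} \in R^n$ satisfying all the semi-algebraic constraints of $S$. Finally, the Tarski--Seidenberg transfer principle, applied to the existential semi-algebraic formula defining $S$, implies that solvability over some real closed field extension is equivalent to solvability over $\mathbb{R}^n$, yielding an actual point in $S$ and contradicting emptiness.

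The main obstacle is the passage from the maximal cone to a usable ordering together with the transfer step. Verifying the dichotomy and primality of $I$ is delicate: if neither $f$ nor $-f$ lay in $P$, one must check via a careful cone-closure computation that adjoining either still avoids $-1$, contradicting maximality; primality is then obtained by multiplying hypothetical zero divisors in the quotient and deriving a forbidden membership. The Tarski--Seidenberg principle itself is the pivotal non-elementary ingredient, since without it the construction only yields a solution in an abstract real closed field extension, not in $\mathbb{R}^n$; it is this model-theoretic bridge between algebraic and analytic existence that ultimately makes the Positivstellensatz work over the reals.
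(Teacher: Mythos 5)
The paper itself offers no proof of Theorem \ref{psatz1}: it is quoted from Stengle \cite{gsten1} and used as a black box, so there is no in-paper argument to compare yours against; I can only judge your sketch on its own terms, and as an outline of the classical Krivine--Stengle proof it is faithful and contains no wrong step. Two points are worth recording. First, you correctly notice that the certificate as printed, ``$\mathrm{cone}\{g_{1},\dots,g_{q_1}\}+\mathrm{ideal}\{h_{1},\dots,h_{q_2}\}=0$,'' is degenerate as literally written (take all $s_i=0$, $t_j=0$): the identity must be normalised so that the cone element carries a strictly positive constant, i.e.\ $1+s_{0}+\sum_i s_iG_i+\sum_j t_jh_j=0$, which is exactly the form the paper itself uses in \eqref{eq:psatz}. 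With that repair your evaluation argument for (2)$\Rightarrow$(1) at a hypothetical $x^{\star}\in S$ is correct, since each $s_i(x^{\star})\geq 0$, each monoid element $G_i(x^{\star})\geq 0$ as a product of nonnegative $g_k(x^{\star})$, and each $h_j(x^{\star})=0$. Second, for (1)$\Rightarrow$(2) you name the right chain of ideas: Zorn's lemma applied to cones containing $C=\mathrm{cone}\{g_i\}+\mathrm{ideal}\{h_j\}$ and avoiding $-1$; the dichotomy $P\cup(-P)=\mathbb{R}[x]$ and primality of $I=P\cap(-P)$ forced by maximality (the computation you allude to is the standard one, multiplying two relations of the form $-1=p+fq$ and $-1=p'-fq'$ to land $-1$ back in $P$); the induced total order on $\mathbb{R}[x]/I$ making the $g_i$ nonnegative and the $h_j$ zero; and the Tarski--Seidenberg transfer from the real closure of the fraction field back to $\mathbb{R}^n$. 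Those are indeed the two non-elementary ingredients, and a complete write-up would need the ordering lemmas spelled out, but as a blind reconstruction of the argument behind a cited theorem your proposal is sound.
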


Based on Theorem \ref{psatz1} one can create a convex test for computational purposes by using a representation of the function $p(x)$.
\begin{proposition}
Consider the set $S$ in \eqref{Sset}. If
\begin{equation}
    p = 1 + \sum_{j = 1}^{q_2}t_{j}h_{j} + s_{0} + \sum_{i = 1}^{q_1}s_{i}g_{i}, \label{eq:psatz}
\end{equation}
where $s_{i} \in \Sigma[x]$ and $t_{j}\in \mathbb{R}[x]$, then $p(x)>0, \: \forall x \in S$.
\end{proposition}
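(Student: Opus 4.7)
The plan is to verify positivity pointwise: fix an arbitrary $x \in S$ and evaluate the right-hand side of \eqref{eq:psatz} term by term, using the defining inequalities and equalities of $S$ together with the sign properties of SOS polynomials.

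First I would pick $x \in S$. By the definition \eqref{Sset}, this gives $g_i(x) \geq 0$ for all $i = 1, \dots, q_1$ and $h_j(x) = 0$ for all $j = 1, \dots, q_2$. Next I would dispose of the ideal part: since each $h_j(x) = 0$, the sum $\sum_{j=1}^{q_2} t_j(x) h_j(x)$ vanishes identically on $S$, regardless of the polynomials $t_j \in \mathbb{R}[x]$. Then I would handle the cone part. Because $s_0 \in \Sigma[x]$, Definition \ref{def:sos} yields $s_0(x) = \sum_i r_i(x)^2 \geq 0$. For each $i$, $s_i(x) \geq 0$ by the same argument, and $g_i(x) \geq 0$ by membership in $S$, so every product $s_i(x) g_i(x)$ is non-negative.

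Combining these observations, the evaluation of \eqref{eq:psatz} at $x$ reduces to
\begin{equation*}
    p(x) = 1 + 0 + s_0(x) + \sum_{i=1}^{q_1} s_i(x) g_i(x) \geq 1,
\end{equation*}
which is strictly greater than $0$. Since $x \in S$ was arbitrary, this yields $p(x) > 0$ for all $x \in S$.

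There is really no hard step here: the proposition is a one-sided, constructive weakening of the full Positivstellensatz (Theorem \ref{psatz1}), trading the equivalence for a sufficient condition whose verification is elementary. The value of the statement is not the difficulty of its proof but the fact that, unlike the general Psatz certificate, the representation \eqref{eq:psatz} with fixed degree bounds on $s_i$ and $t_j$ is a convex (semidefinite) condition on the coefficients, which is what makes it usable in the SOS framework described above.
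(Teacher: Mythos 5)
Your proof is correct: the paper states this proposition without proof, treating it as an immediate one-sided consequence of the Positivstellensatz machinery, and your pointwise verification (the ideal part vanishes on $S$, the SOS multipliers and the $g_i$ are non-negative there, so $p(x) \geq 1 > 0$) is exactly the standard argument the paper implicitly relies on. Nothing is missing.
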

To test if the polynomial $p(x) \geq 0$, $\forall x \in S$ using SOS programming we can rewrite \eqref{eq:psatz} as 
\begin{equation*}
     p + \sum_{j = 1}^{q_2}t_{j}h_{j} - \sum_{i = 1}^{q_1}s_{i}g_{i} \in \Sigma[x],
\end{equation*}
where $s_{i} \in \Sigma[x]$ and $t_{j}\in \mathbb{R}[x]$. By selecting higher degree multipliers $s_i$, $t_j$ etc. we can obtain a series of set emptiness tests with increasing complexity and non-decreasing accuracy.

\subsection{Stability of Neural Feedback Loops using Sum of Squares} \label{sec:stabilityNFLs}
In this section we outline the methods proposed in \cite{mnew5}, which presents a method to determine if the equilibrium a closed loop NFL is stable and then also use this to compute an inner approximation of the region of attraction. 

Consider a continuous-time system
\begin{align} \label{eq:nfl2}
\begin{split}
    \dot{z}(t) &= f(z(t),u(t)),
\end{split}
\end{align}
where $f$ is the plant model $z(t) \in \mathbb{R}^{n_{z}}$ and $u(t) \in \mathbb{R}^{n_{u}}$ are the system states and inputs respectively. $n_{z}$ and $n_{u}$ are the number of system states and inputs respectively. This system is a continuous time NFL if the controller $u$ is an NN. Consider a state feedback controller $u(t) = \pi(z(t)):  \mathbb{R}^{n_{z}} \rightarrow \mathbb{R}^{n_{u}}$ as a feed-forward fully connected NN such that
\begin{align} \label{eq:nnconstab}
\begin{split}
    x^{0}(t) &= z(t), \\ 
    v^{k}(t) &= W^{k}x^{k}(t) + b^{k}, \: \mathrm{for} \: k = 0,\dots, \ell - 1, \\ x^{k+1}(t) &= \phi (v^{k}(t)), \: \mathrm{for} \: k = 0,\dots, \ell - 1, \\
     \pi(z(t)) &= W^{\ell}x^{\ell}(t) + b^{\ell}, 
    \end{split}
\end{align}
where $W^{k} \in \mathbb{R}^{n_{k+1} \times n_{k}}$, $b^{k} \in \mathbb{R}^{n_{k+1}}$ are the weights matrix and biases of the $(k+1)^{\text{th}}$ layer respectively and $z(t) = x^{0}(t) \in \mathbb{R}^{n_{z}}$ is the input into the NN. The activation function $\phi$ is applied element-wise to the $v^{k}(t)$ terms. The number of neurons in the $k^{\text{th}}$ layer is denoted by $n_{k}$. 

As shown in \cite{mnew5}, an NFL can be viewed as a dynamical system with equality and inequality constraints that arise from the input-output description of the NN. These constraints can be described as the semi-algebraic set in \eqref{Sset}. Further constraints can be added to the semi-algebraic set when only local asymptotic stability is being verified. The region 
\begin{equation}
    D^{z} = \left \{ z \in \mathbb{R}^{n_z} \: | \: d_{k}(z) \geq 0, \: k=1,\dots, n_{d} \right \}, \label{eq:Dz}
\end{equation}
is considered where the stability conditions will need to be satisfied. 

\begin{proposition} (\cite{mnew5}) \label{prop:nflstability}
Consider System \eqref{eq:nfl2} in feedback with an NN controller given by \eqref{eq:nnconstab}. Suppose the input-output properties of the NN are described by \eqref{Sset}, and consider the region given by \eqref{eq:Dz}. Suppose there exists a polynomial function $V(z)$ satisfying the following conditions
\begin{align} \label{sosopt2}
\begin{split}
    V(z) - \rho(z) &\in \Sigma[z], \\
    \rho(z) &> 0, \\
    -\frac{\partial V}{\partial z}(z) f(z,\pi(z)) - \sum_{k=1}^{n_d}p_{k}(X)d_{k}(z) -& \sum_{j=1}^{q_2}t_{j}(X)h_{j}(x)  - \sum_{i=1}^{q_1}s_{i}(X)g_{i}(x) \in \Sigma[X], \\
    p_{k}(X) &\in \Sigma[X], \: \forall k = 1, \dots , n_{d}, \\
    s_{i}(X) &\in \Sigma[X], \: \forall i = 1, \dots, q_{1}, \\
    t_{j}(X) &\in \mathbb{R}[X], \: \forall j = 1, \dots, q_{2}, 
\end{split}
\end{align}
where $X$ is a vector of all the system and NN states, i.e. $X = (x,z)$. Then the equilibrium of the NFL is stable.
\end{proposition}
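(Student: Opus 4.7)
The plan is a standard Lyapunov argument where the Positivstellensatz supplies the algebraic certificates for positive definiteness of $V$ and non-positivity of $\dot V$ along trajectories of the closed-loop NFL.

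First, I would interpret the two conditions $V(z)-\rho(z)\in\Sigma[z]$ and $\rho(z)>0$. The SOS membership forces $V(z)\geq \rho(z)$ pointwise, and $\rho$ being a positive (definite) polynomial then makes $V$ a valid Lyapunov candidate: it is nonnegative and strictly positive away from the equilibrium (with $V$ vanishing at the equilibrium by construction of $\rho$). This handles the positivity half of the classical Lyapunov conditions.

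Second, I would argue that along any closed-loop trajectory with $z(t)\in D^{z}$, the augmented state $X(t)=(x(t),z(t))$ lies in the semi-algebraic set $S$ in \eqref{Sset}, because $x$ is produced by the exact NN recursion \eqref{eq:nnconstab} whose input-output behaviour is, by construction, encoded by the $g_i$ and $h_j$ (and $d_k(z)\geq 0$ simply restates $z\in D^{z}$). I would then invoke the Psatz-style identity that the third SOS condition represents: rearranging it yields
\begin{equation*}
-\frac{\partial V}{\partial z}(z)\, f(z,\pi(z)) \;=\; \sigma(X) \;+\; \sum_{k=1}^{n_d} p_k(X)\,d_k(z) \;+\; \sum_{j=1}^{q_2} t_j(X)\,h_j(x) \;+\; \sum_{i=1}^{q_1} s_i(X)\,g_i(x),
\end{equation*}
for some $\sigma\in\Sigma[X]$. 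Evaluating on $X(t)\in S$ with $z(t)\in D^{z}$, every term on the right is nonnegative (SOS multipliers $\sigma,p_k,s_i$ are nonnegative; $d_k,g_i$ are nonnegative on the set; $h_j$ vanish), so $\dot V(z(t)) = \frac{\partial V}{\partial z}\,f(z,\pi(z)) \leq 0$.

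Finally, combining positive definiteness of $V$ with $\dot V\leq 0$ on $D^{z}$ gives Lyapunov stability of the equilibrium by the standard local Lyapunov theorem. The main obstacle, and the reason this reduction really works, is the middle step: one must be sure that the semi-algebraic description $S$ of the NN is \emph{consistent} with the actual NN map, i.e.\ that every trajectory produces intermediate activations $x(t)$ for which $g_i(x(t))\geq 0$ and $h_j(x(t))=0$. For rational (or otherwise polynomially representable) activation functions this holds by direct encoding of the layer equations $v^{k}=W^{k}x^{k}+b^{k}$ and $x^{k+1}=\phi(v^{k})$ into equality constraints, which is precisely the setting the paper is tailoring its activation functions for; the remaining SOS manipulation is then routine, and the certificate simply exists whenever the multipliers $\sigma,s_i,p_k,t_j$ of sufficient degree can be found.
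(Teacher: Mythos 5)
The paper states this proposition without proof, citing \cite{mnew5}, so there is no in-paper argument to compare against; your proposal is the standard constrained-dynamical-systems Lyapunov argument that such a result rests on, and it is essentially correct: positivity of $V$ from $V-\rho\in\Sigma[z]$ with $\rho$ positive definite, membership of the trajectory's augmented state $X=(x,z)$ in the semi-algebraic set $S$ intersected with $D^{z}$, and the generalized S-procedure/Psatz identity forcing $\dot V\leq 0$ there. The only point worth tightening is your parenthetical claim that $V$ vanishes at the equilibrium ``by construction of $\rho$'': the conditions $V\geq\rho\geq 0$ alone do not force $V(0)=0$, so one must additionally normalize $V$ (e.g.\ no constant term, as the paper does implicitly in its inverted-pendulum example via $a_2+a_3+a_4=0$) for the Lyapunov argument to close.
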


If a Lyapunov function is constructed using Proposition \ref{prop:nflstability} then the region of attraction can be approximated. To achieve this we find the largest level set of the Lyapunov function $V(z)$ that is contained within the region that the Lyapunov conditions are satisfied. This can be cast as an SOS program. Consider a Lyapunov function $V(z)$, if the SOS optimisation problem
\begin{align} \label{eq:sosroa}
\begin{split}
    |z|^k(V(z) - \gamma) + p(z)d(z)  &\in \Sigma[z], \\
    p(z) &\in \Sigma[z], 
\end{split}
\end{align}
is feasible, where $\gamma$ is a variable to be maximised and $k$ is a positive integer, then $V(z) \leq \gamma$ is an estimate of the region of attraction. 

\section{Rational Neural Networks} \label{sec:rationalapproximations}
There has been little prior work investigating NNs with rational expressions for control. \cite{mtel} showed the expressive power of using rational activation functions and how they can be used to approximate commonly used activation functions such as ReLU. The effectiveness of rational NNs in machine learning tasks was shown in \cite{nbou}. Polynomial activation functions have also been used in previous work \cite{lhou,yber,elop,ffar,boli}, however they have not seen much investigation within the machine learning research community due to vanishing and exploding gradients that can be exhibited when used with the backpropagation algorithm \cite{sdub}. Another class of NNs that has seen recent interest are quadratic NNs. These can take many different forms, which are summarised in \cite{zixu}. Quadratic NNs have been shown to be beneficial in many aspects \cite{sidu,fefa,ibuk}, such as being general universal function approximators. Recent work from \cite{lrod} has explored the use of a two-layer NN controller that has quadratic activation functions and how a convex formulation can be achieved using this structure. 

One of the benefits of using polynomial or rational activation functions in the NN is that they can contain trainable parameters, which can improve the performance of the NN and reduce the number of neurons in the network. For example, the rational activation function used in \cite{nbou} is expressed as
\begin{equation} \label{eq:simplerationalAF}
    \phi(x) = \frac{P(x)}{Q(x)} = \frac{\sum_{i=0}^{r_{P}}a_{i}x^{i}}{\sum_{j=0}^{r_{Q}}b_{j}x^{j}},
\end{equation}
where $r_{P} = \mathrm{deg}(P(x))$, $r_{Q} = \mathrm{deg}(Q(x))$ and $a_{i}$ and $b_{j}$ are the trainable parameters within the activation function. Another proposed activation function based on rational functions is the Pad\'e activation unit, which has shown to be useful when applied to image classification \cite{amol}. This activation function is expressed as
\begin{equation*}
    \phi(x) = \frac{P(x)}{Q(x)} = \frac{\sum_{i=0}^{r_{P}}a_{i}x^{i}}{1+ | \sum_{j=0}^{r_{Q}}b_{j}x^{j} |}
\end{equation*}
and can be used to learn and approximate commonly used activation functions, whilst training the NN. The resulting NNs can have compact representations and can perform similarly to state-of-the-art NNs.

\subsection{Rational Approximation of Tanh Activation Function} \label{sec:tanhrationalapproximations}
Despite their success in some machine learning applications, rational activation functions have yet to be applied to NN controllers. Most methods to train NNs in control systems use traditional activation functions such as ReLU, sigmoid and tanh. In addition, NNs that are used in control systems are often significantly smaller than those used for machine learning tasks such as image classification. The reason behind this is that methods to train NNs for control often require the use of an SDP, meaning that having a large number of parameters in the NN makes the problem become intractable. However, the requirement for the NN to only contain a small number of neurons to learn a function that can sufficiently control a system has not been well explored. One justification for this is the low number of input and output dimensions that are required for small NFLs. However, if the number of system states increases, then larger networks may be required, which may be intractable to obtain with current methods. 

For NNs to be effectively used as controllers, it would be of interest to investigate the use of alternative NN structures and activation functions that may give a sufficient level of expressivity in the network, whilst being easier to compute or ensure robustness guarantees. Motivated by this we propose a novel activation function defined by a simple rational expression. We name this function `Rtanh' as it is an approximation of the tanh activation function 
\begin{equation*}
    \mathrm{tanh}(x) = \phi(x) = \frac{e^{x} - e^{-x}}{e^{x} + e^{-x}}
\end{equation*}
and is defined as
\begin{equation*}
    \mathrm{Rtanh}(x) = \phi(x) = \frac{4x}{x^2 + 4}.
\end{equation*}
This function is shown in Figure \ref{fig:RtanhCompare} against the tanh function and the error between these functions is shown in Figure \ref{fig:RtanhDiff}.

\begin{figure}[h!]
    \centering
    \begin{subfigure}[b]{0.49\textwidth}
        \centering
        \includegraphics[width=\textwidth]{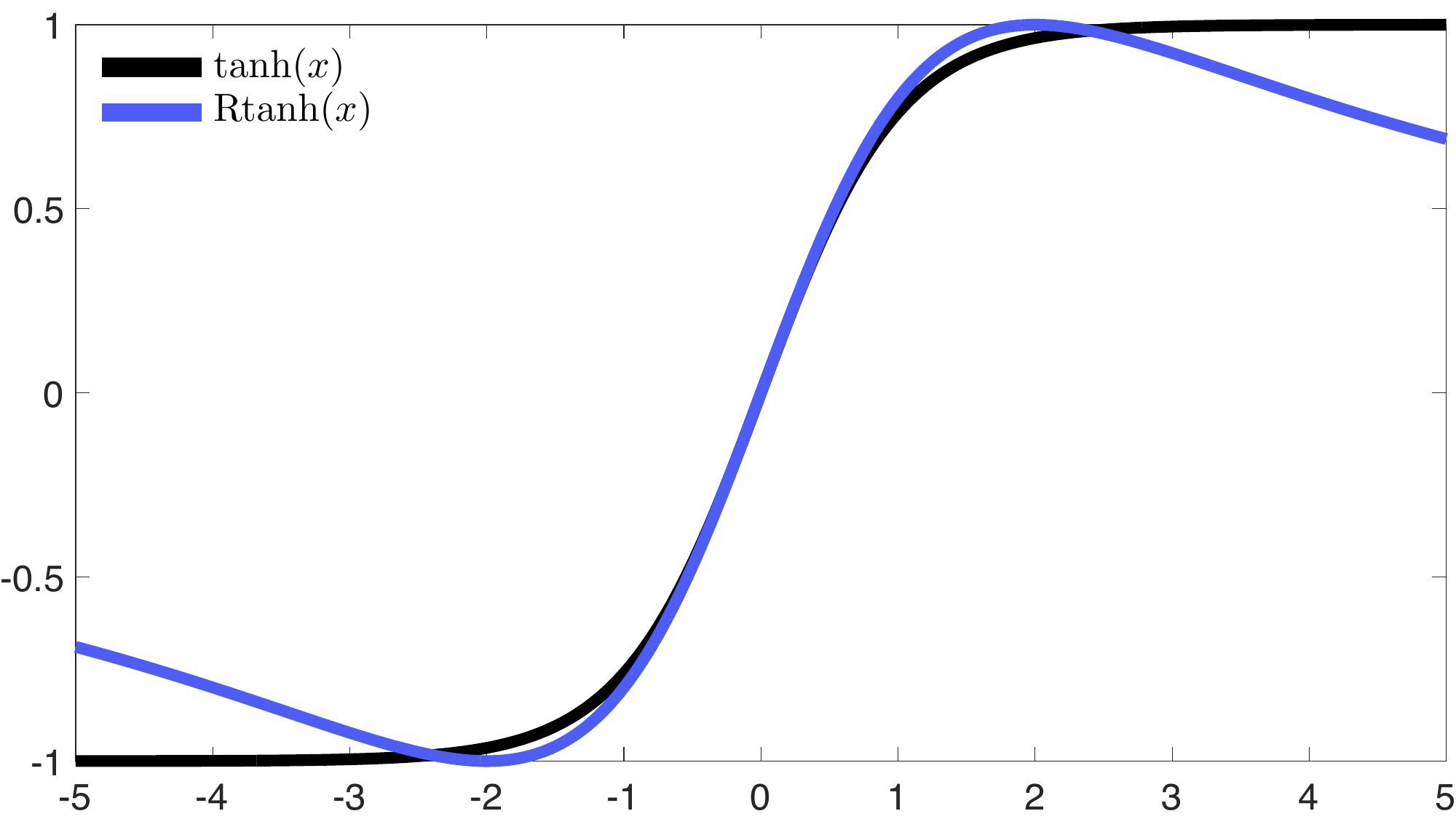} 
        \caption{Comparison between $\mathrm{Rtanh}(x)$ and $\mathrm{tanh}(x)$}
        \label{fig:RtanhCompare}
    \end{subfigure}
    \hfill
    \begin{subfigure}[b]{0.49\textwidth}  
        \centering 
        \includegraphics[width=\textwidth]{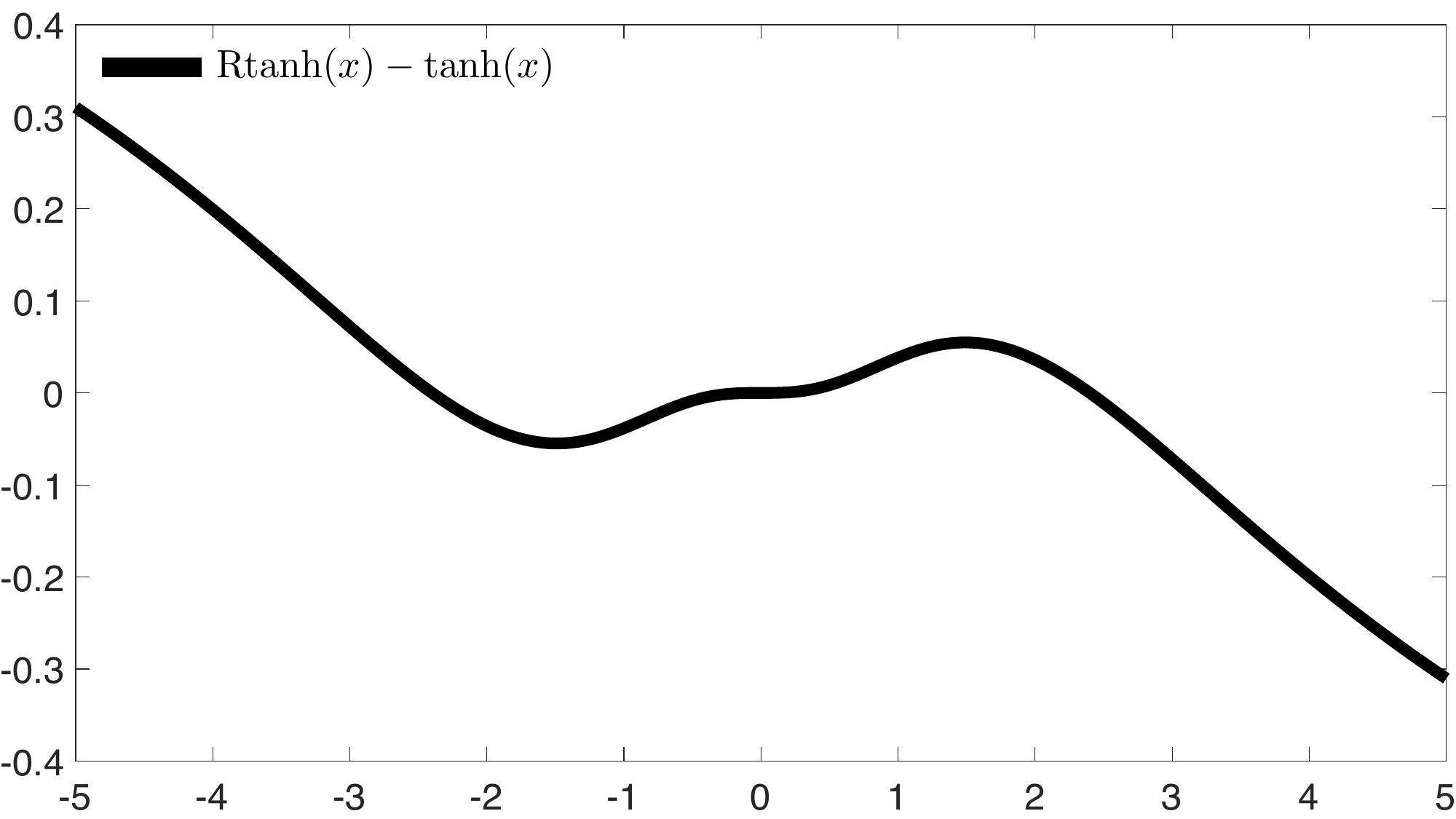}
        \caption{Error between $\mathrm{Rtanh}(x)$ and $\mathrm{tanh}(x)$}
        \label{fig:RtanhDiff}
    \end{subfigure}
    \caption{Plots showing a comparison between the $\mathrm{Rtanh}(x)$ and $\mathrm{tanh}(x)$ functions.}
\end{figure}

The tanh function can be over-approximated by sector constraints, as demonstrated in \cite{hyin2,fagu,njun}. These constraints can then be used to analyse the NFL by creating an optimisation problem through an SDP or SOS programming. However, these bounds are very conservative as shown in Figure \ref{fig:sectortanh}. A big advantage of the `Rtanh' function is that it can be represented by a single equality constraint such that
\begin{equation} \label{eq:rtanheq}
    \phi(x)(x^2 + 4) - 4x = 0.
\end{equation}
If this activation function were to be used in an NN, then to test its robustness properties using the Psatz, the equality constraint \eqref{eq:rtanheq} can be used directly. This is beneficial as no conservatism is introduced when abstracting the input-output properties of the NN using a semi-algebraic set. 

To demonstrate that the Rtanh activation function is useful when analysing NFLs, we take an existing NFL that uses tanh activation functions. We consider the inverted pendulum from \cite{ppau2}, which uses a five layer NN with five nodes in each layer and tanh activation functions. The dynamics of this system are expressed as
\begin{equation} \label{eq:ipreachrational} 
    \ddot{\theta}(t) = \frac{mgl \sin{\theta (t)} - \mu \dot{\theta} (t) + \text{sat}(u(t))}{m l^{2}}, 
\end{equation}
where $\text{sat}(\cdot)$ is the saturation function. The system is discretised with time step $\Delta t = 0.2$ and is parameterised by $m = 0.15$kg, $l=0.5$m, $\mu = 0.5$Nms/rad, $g=10$m/s\textsuperscript{2}, $u_{\text{max}} = 1$ where $u_{\text{max}}$ is the saturation limit in the saturation function. As shown in \cite{mnew3}, we can use `ReachSparsePsatz', which is an SOS optimisation technique to approximate the reachable set at each time step. Using the tanh activation function as in the original NN controller, the reachable sets can be computed and are shown in Figure \ref{fig:ReachTanhAF}. 

\begin{figure}[h!] 
    \centering  
    \includegraphics[height=8cm]{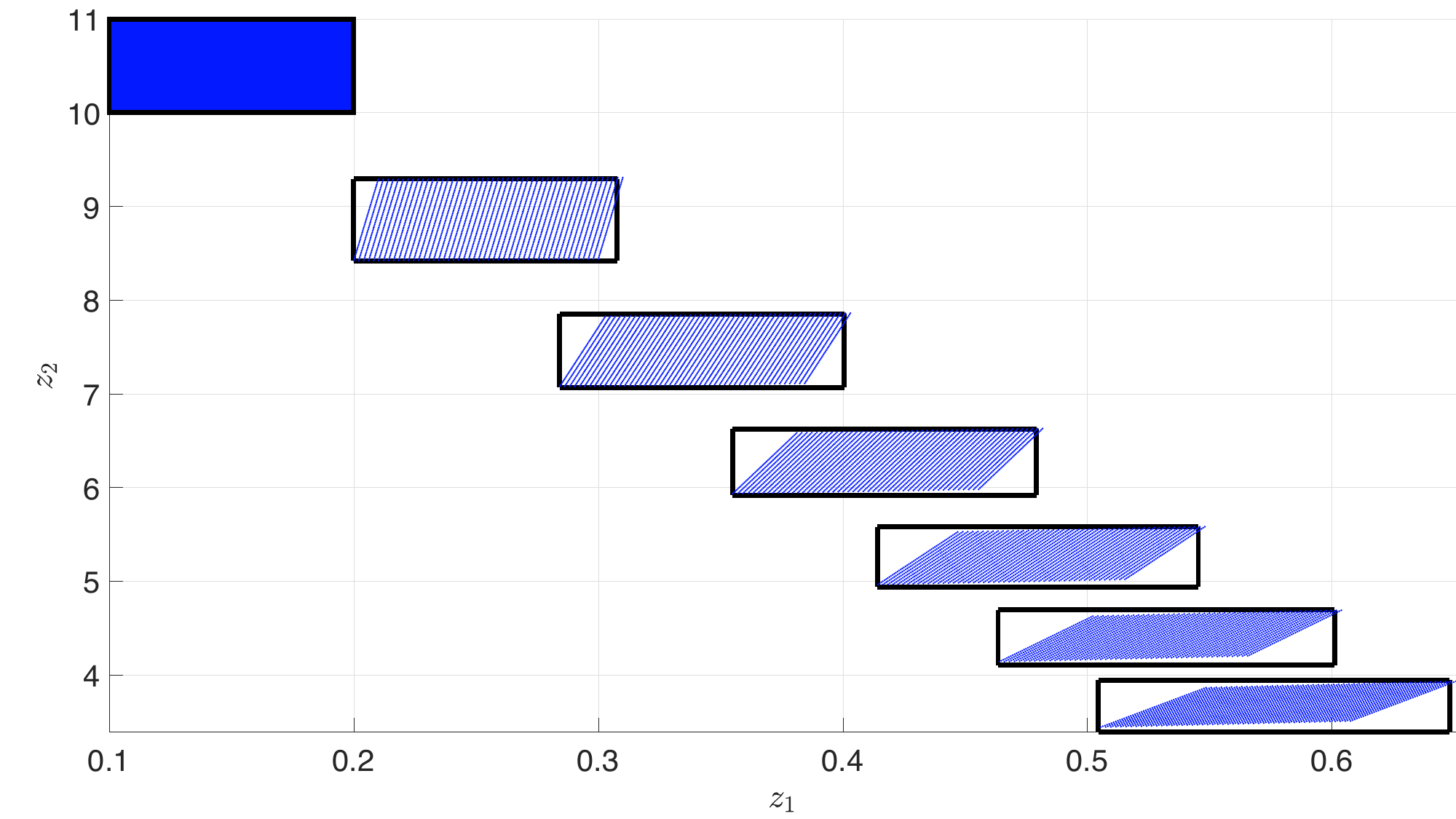}
    \caption{Reachable sets for the inverted pendulum in \eqref{eq:ipreachrational} with the tanh activation function. The reachable sets are computed using ReachSparsePsatz and are shown in black. The true reachable sets are shown in blue.} \label{fig:ReachTanhAF}
\end{figure}

We then replace the tanh function with Rtanh and observe the system behaviour. We conduct the same reachability analysis as in \cite{mnew3} by computing the reachable sets for six time steps. In Figure \ref{fig:ReachRationalAF} we can see that this activation function behaves similarly to that of the tanh function and that using the Psatz with sparse polynomial optimisation (ReachSparsePsatz) gives very tight approximations of the reachable sets. 

\begin{figure}[h!] 
    \centering  
    \includegraphics[height=8cm]{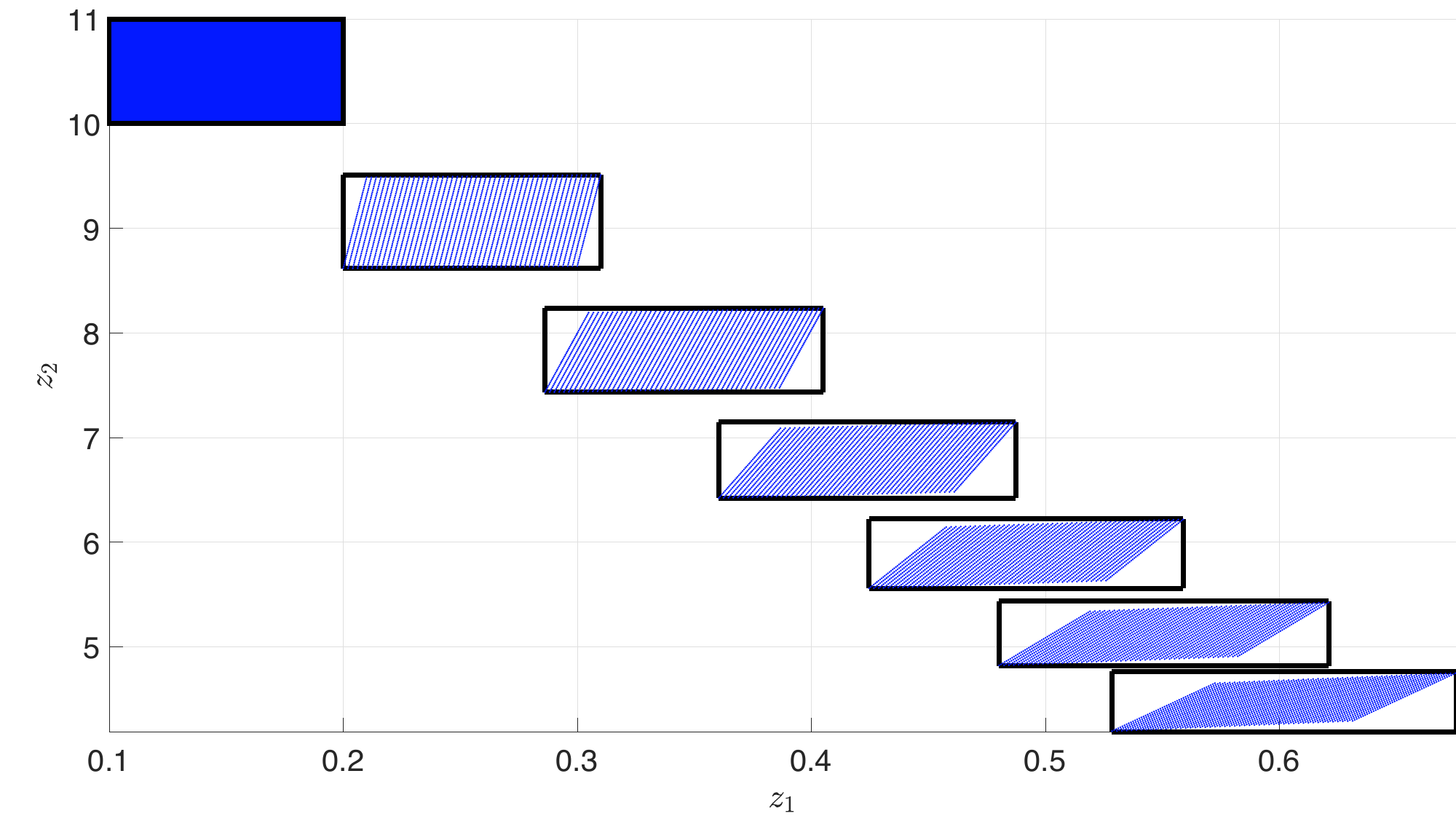}
    \caption{Reachable sets for the inverted pendulum in \eqref{eq:ipreachrational} where the tanh activation function is replaced with the Rtanh function. The reachable sets are computed using ReachSparsePsatz and are shown in black. The true reachable sets are shown in blue.} \label{fig:ReachRationalAF}
\end{figure}

We also compute the region of attraction using the approach outlined in Section \ref{sec:stabilityNFLs}, refereed to as `NNSOSStability'. The region of attraction when using the tanh and Rtanh functions are shown in Figure \ref{fig:ROATanhAF} and Figure \ref{fig:ROARationalAF} respectively. We find that the region of attraction is increased significantly when using Rtanh over the tanh activation function. The areas of the region of attraction on the phase plane are 58 and 0.94 per square unit for the Rtanh and tanh activation functions respectively. 

\begin{figure}[h!] 
    \centering  
    \includegraphics[height=8cm]{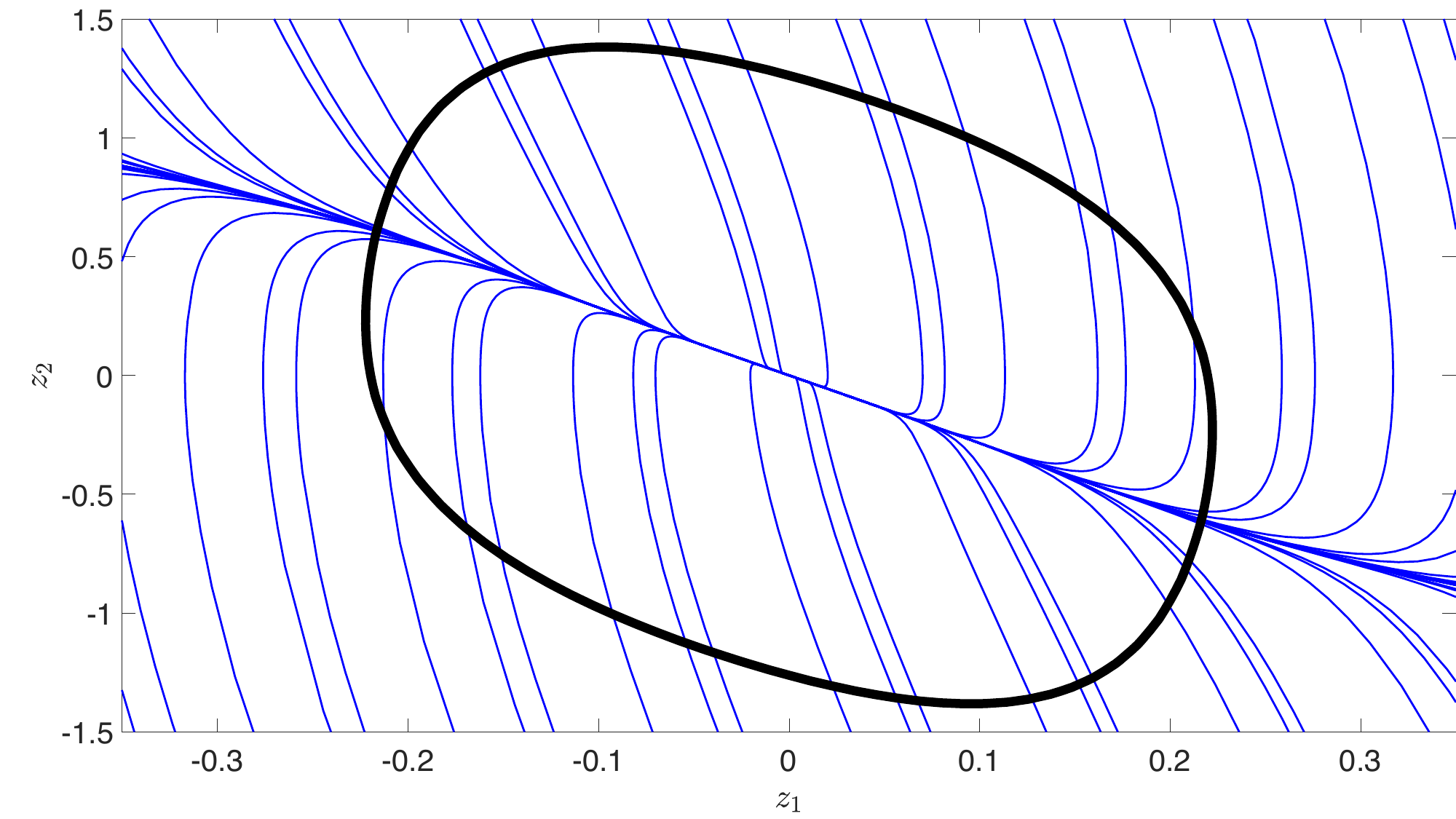}
    \caption{Diagram showing the region of attraction for the inverted pendulum in \eqref{eq:ipreachrational} with tanh activation functions in the neural network controller. The region of attraction is computed using NNSOSStability (black) with a fourth order Lyapunov function. The system trajectories are shown in blue.} \label{fig:ROATanhAF}
\end{figure}

\begin{figure}[h!] 
    \centering  
    \includegraphics[height=8cm]{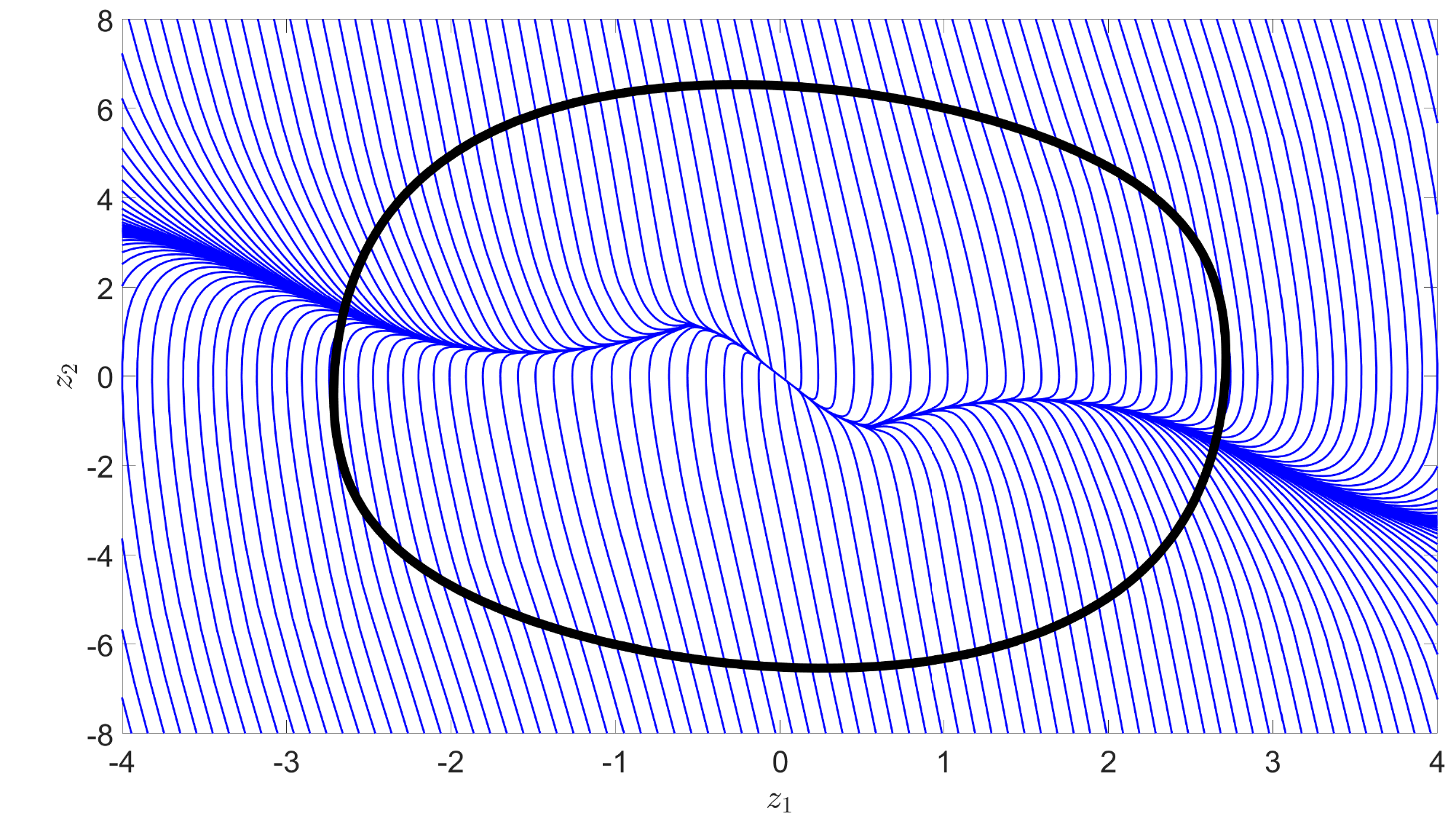}
    \caption{Diagram showing the region of attraction for the inverted pendulum in \eqref{eq:ipreachrational} where the tanh activation function is replaced with the Rtanh function. The region of attraction is computed using NNSOSStability (black) with a fourth order Lyapunov function. The system trajectories are shown in blue.} \label{fig:ROARationalAF}
\end{figure}

These results show that the Rtanh function can be useful in NNs and NFLs. However, this function is not compatible with most recent methods to learn NN controllers due to it being represented by an equality that is a polynomial of degree three. Indeed, most methods require sector inequality constraints and the Schur complement, which is used to make the problem convex. Therefore, this function cannot be used in those formulations. This requires the use of an alternative method to obtain an NN controller with these activation functions, which we will investigate later in this paper.

\subsection{Rational Approximation of Sigmoid Activation Function}
It is also possible to approximate the sigmoid function as a rational function. We define the function
\begin{equation*}
    \mathrm{Rsig}(x) = \phi (x) = \frac{(x + 4)^2}{2(x^{2} + 16)}. 
\end{equation*}
As shown in Figure \ref{fig:RsigCompare}, this is a good approximation to the sigmoid function. Figure \ref{fig:RsigDiff} shows the error with the sigmoid function. Rsig can be expressed as the equality constraint
\begin{equation*}
    2(x^{2} + 16)\phi (x) - (x + 4)^2 = 0.
\end{equation*}

\begin{figure}[h!]
    \centering
    \begin{subfigure}[b]{0.49\textwidth}
        \centering
        \includegraphics[width=\textwidth]{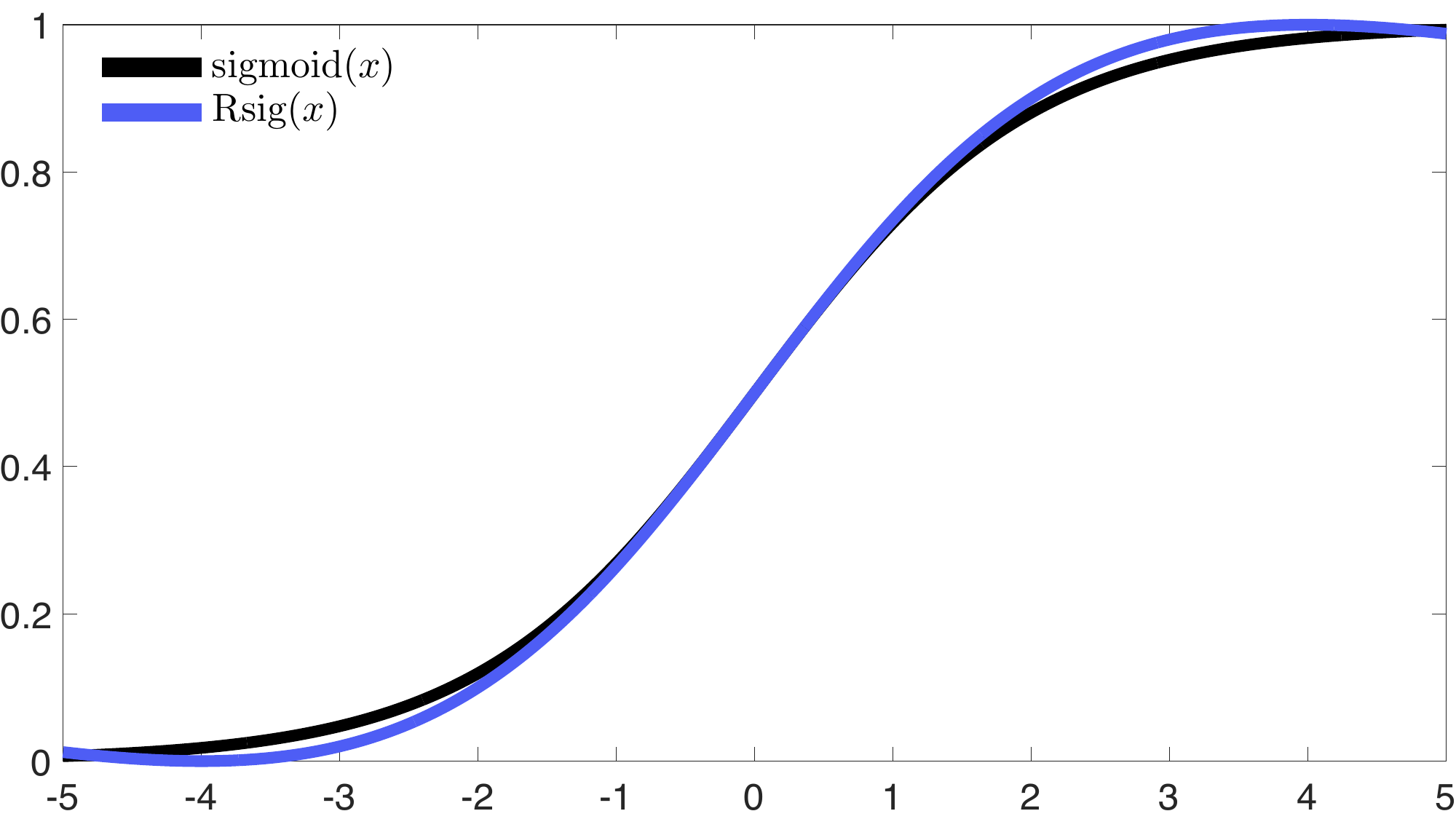} 
        \caption{Comparison between $\mathrm{Rsig}(x)$ and $\mathrm{sig}(x)$}
        \label{fig:RsigCompare}
    \end{subfigure}
    \hfill
    \begin{subfigure}[b]{0.49\textwidth}  
        \centering 
        \includegraphics[width=\textwidth]{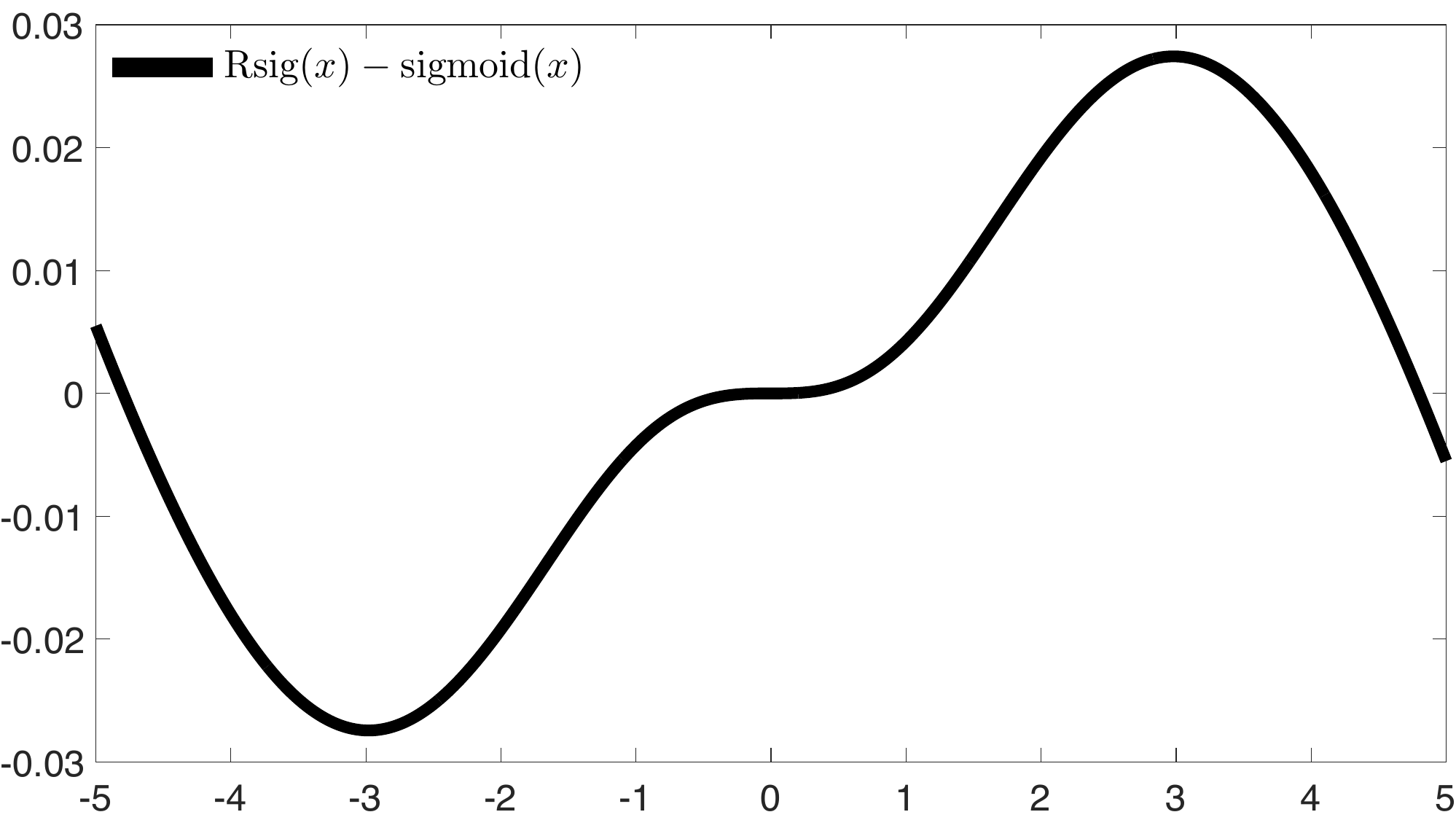}
        \caption{Error between $\mathrm{Rsig}(x)$ and $\mathrm{sig}(x)$}
        \label{fig:RsigDiff}
    \end{subfigure}
    \caption{Plots showing a comparison between the $\mathrm{Rsig}(x)$ and $\mathrm{sig}(x)$ functions.}
\end{figure}

\subsection{Irrational Approximation of ReLU Activation Function}
Approximating the ReLU function with a rational function is more difficult. Here we consider the class of irrational activation functions; to demonstrate how they could be used we propose the following function
\begin{equation*}
    \mathrm{IReLU}(x) = \phi(x) = \sqrt{x^2 + 1} + x - 1.
\end{equation*}
This function and the error with the ReLU function are shown in Figure \ref{fig:IReLUCompare} and Figure \ref{fig:IReLUDiff} respectively. By making a substitution, the IReLU function can be expressed as a semi-algebraic set with two equality constraints and one inequality constraint
\begin{align*}
    \phi(x) - y - x + 1 &= 0, \\
    y^2 - x^2 - 1 &= 0, \\
    y &\geq 0.
\end{align*}
\begin{figure}[h!]
    \centering
    \begin{subfigure}[b]{0.49\textwidth}
        \centering
        \includegraphics[width=\textwidth]{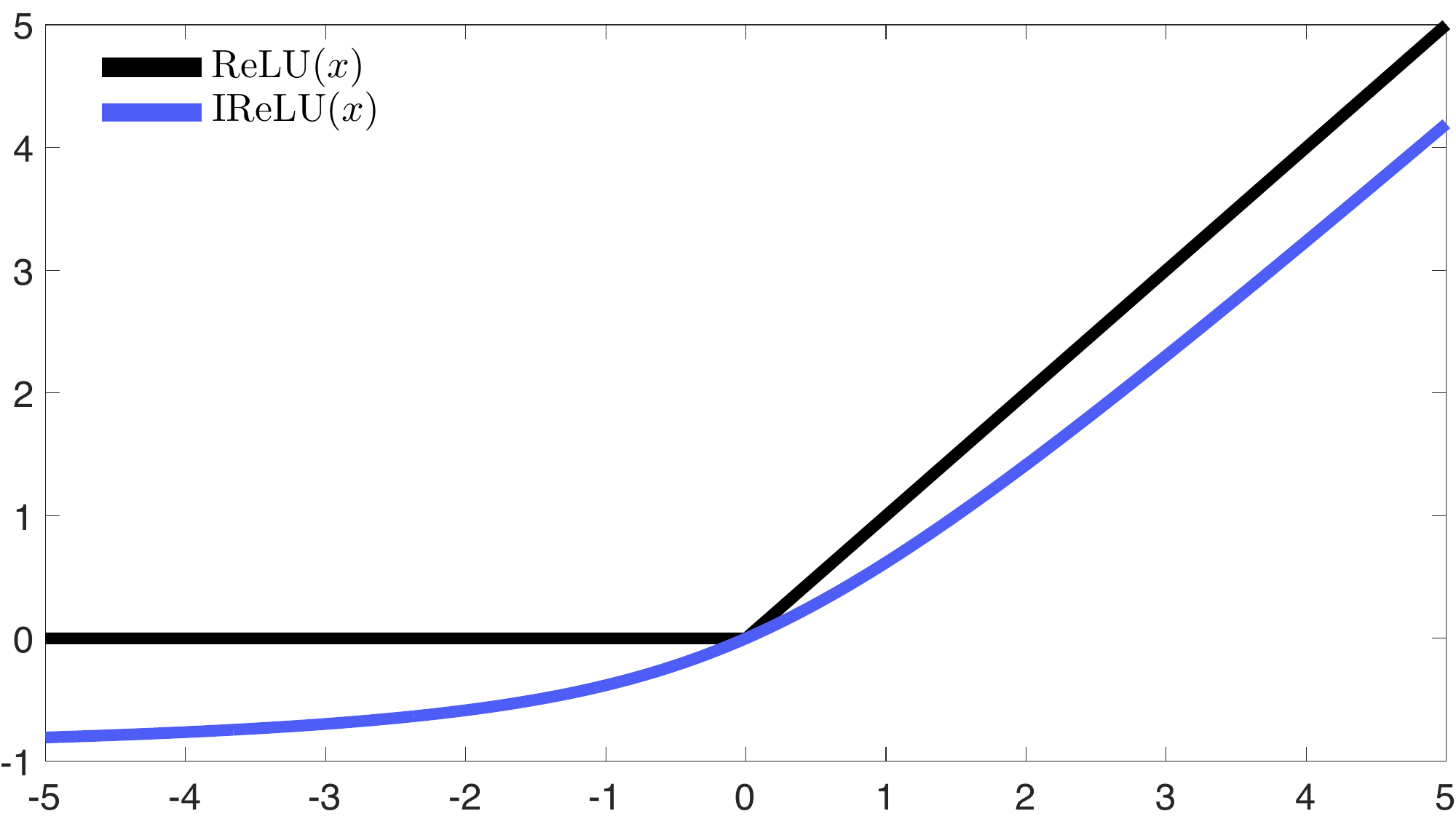} 
        \caption{Comparison between $\mathrm{IReLU}(x)$ and $\mathrm{ReLU}(x)$}
        \label{fig:IReLUCompare}
    \end{subfigure}
    \hfill
    \begin{subfigure}[b]{0.49\textwidth}  
        \centering 
        \includegraphics[width=\textwidth]{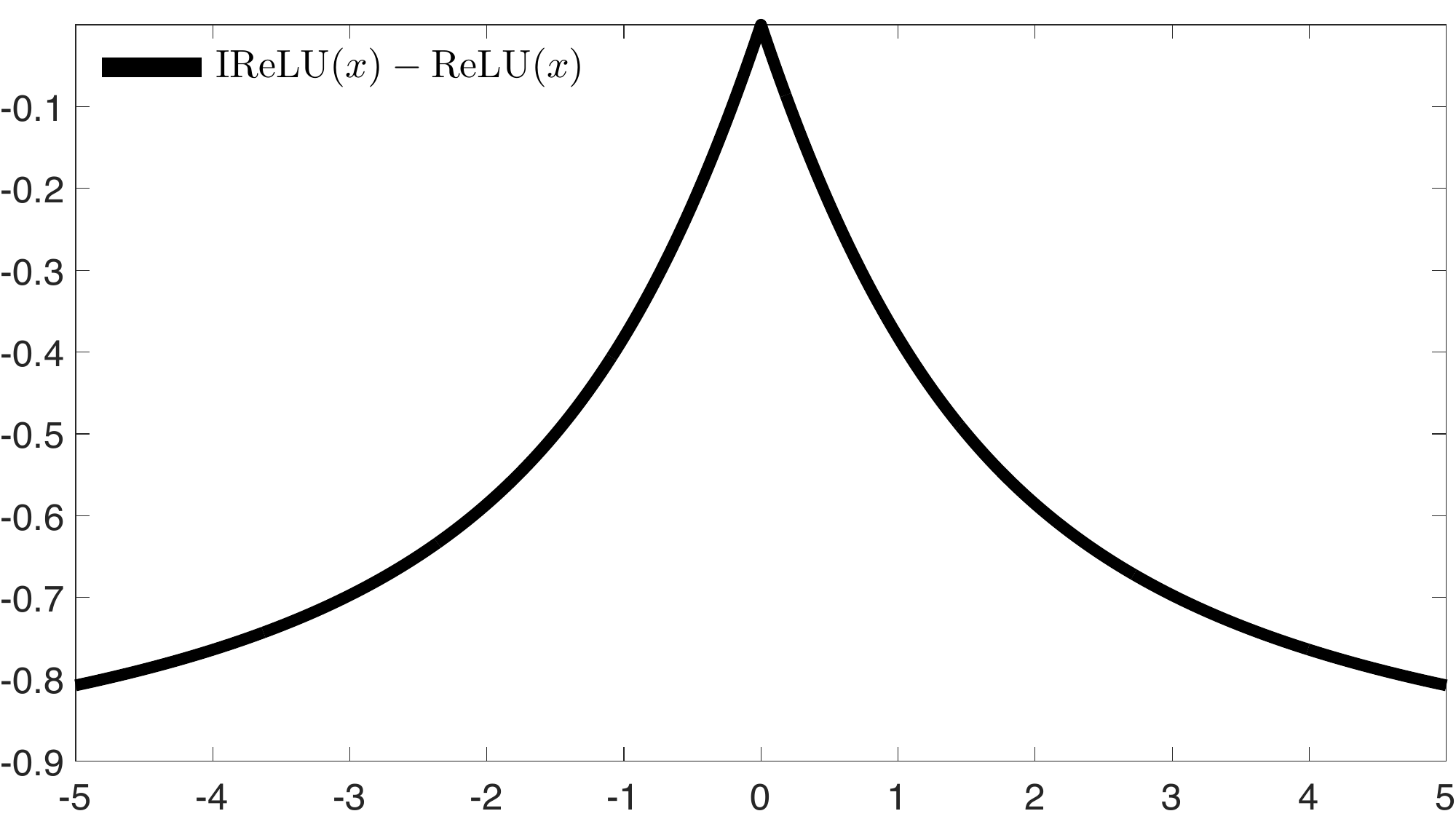}
        \caption{Error between $\mathrm{IReLU}(x)$ and $\mathrm{ReLU}(x)$}
        \label{fig:IReLUDiff}
    \end{subfigure}
    \caption{Plots showing a comparison between the $\mathrm{IReLU}(x)$ and $\mathrm{ReLU}(x)$ functions.}
\end{figure}

\subsection{General Rational Neural Networks}
The rational approximations of the tanh and sigmoid functions proposed in Section \ref{sec:rationalapproximations} can be used as activation functions in an NN. However, if we were to use these structures we could be missing the potential expressivity that can be obtained from a general class of rational functions. As in \eqref{eq:simplerationalAF}, rational activation functions can contain training parameters. Therefore, instead of considering a rigid structure for the activation function and fixing the parameters, we can consider a general rational expression that contains the NN parameters.

If we were to consider a feed-forward fully connected NN with predefined rational activation functions, then the NN can be written as 
\begin{equation} \label{eq:oldrationalnn}
\begin{aligned} 
    x^{0} &= u, \\ 
    v^{k} &= W^{k}x^{k} + b^{k}, \: \mathrm{for} \: k = 0,\dots, \ell - 1, \\ 
    x^{k+1} &= \frac{p(v^{k})}{q(v^{k})}, \: \mathrm{for} \: k = 0,\dots, \ell - 1, \\
    \pi(u) &= W^{\ell}x^{\ell} + b^{\ell}, 
\end{aligned}
\end{equation}
where $p(v^{k})$ and $q(v^{k})$ are polynomial functions with specified coefficients. However, if we substitute the preactivation value $v_{j}^{k}$ into the polynomial expressions, we obtain a rational expression in $x_{j}^{k}$, where the coefficients are parameterised by the values of the weight matrices $W^{k}$ and biases vector $b^{k}$. Any coefficients in the rational activation function will be multiplied by the weights and bias terms. Therefore, we can instead consider an NN with no affine transformation and parameters only in the rational activation function. We can then write the rational NN as
\begin{equation} \label{eq:newrationalnn}
\begin{aligned} 
    x^{0} &= u, \\ 
    x_{i}^{k+1} &= \frac{p_{i}(x^{k})}{q_{i}(x^{k})}, \: \mathrm{for} \: i = 0,\dots, n_{k}, \: \mathrm{for} \: k = 0,\dots, \ell - 1, \\
    \pi_{i}(u) &= \frac{p_{i}(x^{\ell})}{q_{i}(x^{\ell})}, \: \mathrm{for} \: i = 0,\dots, n_{\ell},
\end{aligned}
\end{equation}
where $p_{i}(x^{k})$ and $q_{i}(x^{k})$ are general polynomial functions which can be written as
\begin{align*}
    p_{i}(x) &= \sum_{\alpha \in \mathbb{N}_{d_{p_{i}}}^{n}}\lambda_{\alpha}x^{\alpha}, \\
    q_{i}(x) &= \sum_{\beta \in \mathbb{N}_{d_{q_{i}}}^{n}}\gamma_{\beta}x^{\beta},
\end{align*}
where $d_{q_{i}} = \mathrm{deg}(q_{i})$, $ d_{p_{i}} = \mathrm{deg}(p_{i})$, $\alpha$ and $\beta$ are the exponents that are defined in Section \ref{sec:sosprogpsatz} and $\lambda_{\alpha}$ and $\gamma_{\beta}$ are the coefficients of $p_{i}(x)$ and $q_{i}(x)$ respectively. To show the similarity between \eqref{eq:oldrationalnn} and \eqref{eq:newrationalnn} we present the following simple example.

\begin{example}
    Consider an NN with structure defined by \eqref{eq:oldrationalnn} with two layers and two nodes in each layer. The rational activation functions with $\mathrm{deg}(p) = \mathrm{deg}(q) = 2$ can be written as
    \begin{align*}
        p(v) &= c_{1}v^{2} + c_{2}v + c_{3}, \\
        q(v) &= d_{1}v^{2} + d_{2}v + d_{3}.
    \end{align*} 
    The first node in the second layer can be expressed as
    \begin{align*}
        v_{1}^{1} &= W_{1}^{1}x_{1}^{1} + W_{2}^{1}x_{2}^{1} + b_{1}^{1}, \\
        x_{1}^{2} &= \frac{p(v_{1}^{1})}{q(v_{1}^{1})}.
    \end{align*}
    We can substitute in the preactivation terms into the activation functions to obtain the polynomials
    \begin{multline*}
        p(v_{1}^{1}) = c_{1}(W_{1}^{1})^{2}(x_{1}^{1})^{2} + 2c_{2}W_{1}^{1}W_{2}^{1}(x_{1}^{1}x_{2}^{1}) + c_{3}(W_{2}^{1})^{2}(x_{2}^{1})^{2} + \\ (c_{1}W_{1}^{1} + 2c_{2}W_{1}^{1}b_{1}^{1})(x_{1}^{1}) + (c_{1}W_{2}^{1} + 2c_{2}W_{2}^{1}b_{1}^{1})(x_{2}^{1}) + (c_{1}(b_{1}^{1})^{2} + c_{2}b_{1}^{1} + c_{3}),
    \end{multline*}
    \begin{multline*}
        q(v_{1}^{1}) = d_{1}(W_{1}^{1})^{2}(x_{1}^{1})^{2} + 2d_{2}W_{1}^{1}W_{2}^{1}(x_{1}^{1}x_{2}^{1}) + d_{3}(W_{2}^{1})^{2}(x_{2}^{1})^{2} + \\ (d_{1}W_{1}^{1} + 2d_{2}W_{1}^{1}b_{1}^{1})(x_{1}^{1}) + (d_{1}W_{2}^{1} + 2d_{2}W_{2}^{1}b_{1}^{1})(x_{2}^{1}) + (d_{1}(b_{1}^{1})^{2} + d_{2}b_{1}^{1} + d_{3}),
    \end{multline*}
    which form the rational expression for $x_{1}^{2}$. However, we can instead write the rational expression as
    \begin{equation*}
        x_{1}^{2} = \frac{\lambda_{1}(x_{1}^{1})^{2} + \lambda_{2}x_{1}^{1}x_{2}^{1} + \lambda_{3}(x_{2}^{1})^{2} + \lambda_{4}x_{1}^{1} +  \lambda_{5}x_{2}^{1} + \lambda_{6}}{\gamma_{1}(x_{1}^{1})^{2} + \gamma_{2}x_{1}^{1}x_{2}^{1} + \gamma_{3}(x_{2}^{1})^{2} + \gamma_{4}x_{1}^{1} +  \gamma_{5}x_{2}^{1} + \gamma_{6}},
    \end{equation*}
    which is in the form of \eqref{eq:newrationalnn}. This can be generalised to larger NNs with higher degree polynomials in the rational functions. This approach can reduce the number of parameters in the NN and each polynomial is convex in the decision variables. This allows the parameters in the rational expression to be tuned directly instead of simultaneously tuning the weights, biases and rational activation parameters.
\end{example}

\section{Recovering Stabilising Controllers using Sum of Squares} \label{sec:recovercontSOS}
In this section, we propose a novel procedure to obtain a stabilising controller for a non-linear polynomial system using SOS programming. To do this we leverage the Psatz and exploit its structure to generate a feasibility test for a stabilising controller. 

\begin{proposition} \label{prop:rationalpsatz}
    Consider the polynomial $P(x) \in \mathbb{R}[x]$ and partition $x = [y, z]$ so that $y \in \mathbb{R}^{n}, z \in\mathbb{R}$. Consider the set
    \begin{equation*}
        S = \big\{ y \in \mathbb{R}^{n}, \: z \in\mathbb{R}  \: |\: p_{i}(y,z) - q_{i}(y)z = 0 \: \: \forall \: i = 1, \dots, m \big\}, 
    \end{equation*}
    where $p_{i}(y,z) \in \mathbb{R}[x]$, $q_{i}(z) \in \mathbb{R}[z]$. If 
    \begin{equation} \label{eq:propaltpsatz}
        P(x) - \sum_{i=1}^{m} (p_{i}(y,z) - q_{i}(y)z) \in \Sigma[x], \: q_{i}(y) \neq 0, \: \forall \: i = 1, \dots, m,
    \end{equation}
    then $P(x) \geq 0$ on the set
    \begin{equation*}
        T = \bigg\{ y \in \mathbb{R}^{n}, \: z \in\mathbb{R}  \: \bigg| \: \frac{p_{i}(y,z)}{q_{i}(y)} - z = 0, \: q_{i}(y) \neq 0 \: \:  \forall \: i = 1, \dots, m \bigg\}.
    \end{equation*}
\end{proposition}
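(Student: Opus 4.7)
The plan is to reduce the claim to a direct application of the standard Positivstellensatz-style argument that appears after Theorem \ref{psatz1}, together with a clearing-of-denominators observation. The SOS condition \eqref{eq:propaltpsatz} is a global statement about $\mathbb{R}^{n+1}$, so the work is just to verify that on the set $T$ the subtracted sum vanishes.

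First I would set
\[
  R(x) := P(x) - \sum_{i=1}^{m}\bigl(p_i(y,z) - q_i(y)\,z\bigr),
\]
and note that by hypothesis $R \in \Sigma[x]$, so in particular $R(x) \geq 0$ for every $x = (y,z) \in \mathbb{R}^{n+1}$. Next I would take an arbitrary point $(y,z) \in T$. By definition of $T$ we have $q_i(y)\neq 0$ and
\[
  \frac{p_i(y,z)}{q_i(y)} - z = 0, \qquad i = 1,\dots,m.
\]
Multiplying each such equality through by the nonzero scalar $q_i(y)$ gives
\[
  p_i(y,z) - q_i(y)\,z = 0, \qquad i = 1,\dots,m,
\]
which is precisely the condition defining $S$. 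Summing over $i$ and substituting into the definition of $R$ yields $P(y,z) = R(y,z)$ on $T$, and since $R \geq 0$ everywhere, we conclude $P(x) \geq 0$ for all $x \in T$.

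There is no real obstacle here; the content of the proposition is a definitional equivalence rather than a deep algebraic fact. The only point that deserves explicit mention is the role of the assumption $q_i(y)\neq 0$, which makes clearing denominators reversible and hence ensures $T \subseteq S$ (the reverse inclusion $S \cap \{q_i \neq 0\} \subseteq T$ is equally immediate). Conceptually this says that when the equality generators of the semi-algebraic set are themselves rational, one can certify nonnegativity on $T$ by absorbing the polynomial numerators $p_i(y,z) - q_i(y)z$ into the ideal part of a Positivstellensatz-type representation, exactly as in \eqref{eq:psatz}, with the multipliers $t_j$ here all taken to be the constant $1$. This is the form that will later be used to build the SOS feasibility test for recovering stabilising controllers.
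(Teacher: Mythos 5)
Your proof is correct and follows essentially the same route as the paper: the paper also observes that, since $q_i(y)\neq 0$, the certificate \eqref{eq:propaltpsatz} can be read as $P(x)-\sum_{i=1}^{m}q_i(y)\bigl(\tfrac{p_i(y,z)}{q_i(y)}-z\bigr)\in\Sigma[x]$ with the $q_i$ playing the role of the ideal multipliers, so the subtracted terms vanish on $T$ and nonnegativity of $P$ there follows. Your version, which evaluates pointwise on $T$ after clearing denominators, is just a slightly more explicit rendering of the same argument.
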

\begin{proof}
    Consider the set $S$, if
    \begin{equation*}
        P(x) - \sum_{i=1}^{m} t_{i}(p_{i}(y,z) - q_{i}(y)z) \in \Sigma[x],
    \end{equation*}
    where $t_{i} = 1$, since $t_{i} \in \mathbb{R}[x]$, then $P(x) \geq 0$ on $S$. If we include the condition $q_{i}(y) \neq 0, \: \forall \: i = 1, \dots, m$, then we can rewrite \eqref{eq:propaltpsatz} as
    \begin{equation*} 
        P(x) - \sum_{i=1}^{m} q_{i}(y) \bigg( \frac{p_{i}(y,z)}{q_{i}(y)} - z \bigg) \in \Sigma[x], \: q_{i}(y) \neq 0, \: \forall \: i = 1, \dots, m,
    \end{equation*}
    since $q_{i}(y) \in \mathbb{R}[x]$, then $P(x) \geq 0$ on $T$.
\end{proof}

Proposition \ref{prop:rationalpsatz} considers the Psatz in a particular form to obtain a positivity certificate of a function over a set of rational functions. In essence, this formulation fixes the multipliers in the Psatz to unity and then searches over the polynomials to find a constraint set that satisfies the feasibility test. This is useful if we want to find a constraint set for the nonnegativity of a function, instead of testing that function over a known constraint set. This can be leveraged to find a controller that satisfies the Lyapunov stability conditions. 

\begin{proposition} \label{prop:rationalcontroller}
    Consider the non-linear system
    \begin{equation} \label{eq:systemrational}
    \begin{aligned}
        \dot{z} &= f(z) + g(z)u, \\
        u &= \frac{p(z)}{q(z)}, \: q(z) \neq 0,
    \end{aligned}
    \end{equation}
    where $z \in \mathbb{R}^{n_{z}}$ are the system states, $u \in \mathbb{R}^{n_{u}}$ is the controller input and $f(z)$ and $ g(z)$ are polynomials. Suppose that there exists a Lyapunov function $V(z)$ such that $V(z)$ is positive definite in a neighbourhood of the origin and polynomials $p(z), \: q(z)$ satisfying
    \begin{equation*}
        -\frac{\partial V}{\partial z}(f(z) + g(z)u) - (p(z) - q(z)u) \geq 0 \: \forall z,u.
    \end{equation*}
    Then the origin of the state space is a stable equilibrium of the system.
\end{proposition}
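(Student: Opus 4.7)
The strategy is to invoke Proposition \ref{prop:rationalpsatz} to transfer the global polynomial inequality into a statement along the closed-loop trajectory, and then conclude via the standard Lyapunov stability theorem.

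First I would identify the roles of the variables in Proposition \ref{prop:rationalpsatz}: take $y = z$ and play the role of the scalar variable with the controller input $u$ (the argument can be repeated coordinate-wise if $n_u > 1$, or stated with $u \in \mathbb{R}^{n_u}$ and $m = n_u$ constraints $p_i(z) - q_i(z)u_i = 0$). The polynomial $P(x)$ is taken to be
\begin{equation*}
    P(z,u) = -\frac{\partial V}{\partial z}\bigl(f(z) + g(z)u\bigr) - \bigl(p(z) - q(z)u\bigr).
\end{equation*}
By assumption $P(z,u) \geq 0$ for all $(z,u) \in \mathbb{R}^{n_z} \times \mathbb{R}^{n_u}$. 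In particular, on the algebraic set $T = \{(z,u) : u = p(z)/q(z),\, q(z)\neq 0\}$ the bracketed term $p(z) - q(z)u$ vanishes, so Proposition \ref{prop:rationalpsatz} (applied with the trivial unit multiplier) gives
\begin{equation*}
    -\frac{\partial V}{\partial z}\bigl(f(z) + g(z)u\bigr) \geq 0 \quad \text{on } T.
\end{equation*}

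Next I would substitute the controller law $u = p(z)/q(z)$ into this inequality to obtain the closed-loop Lie derivative
\begin{equation*}
    \dot{V}(z) = \frac{\partial V}{\partial z}\Bigl(f(z) + g(z)\tfrac{p(z)}{q(z)}\Bigr) \leq 0
\end{equation*}
on the domain where $q(z) \neq 0$, which by hypothesis contains a neighbourhood of the origin (since $q(0) \neq 0$ is required for the controller to be well-defined at the equilibrium). Combined with the positive definiteness of $V$ in a neighbourhood of the origin, this is precisely the hypothesis of Lyapunov's direct theorem for stability in the sense of Lyapunov, and the conclusion follows immediately.

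The main obstacle is largely bookkeeping rather than mathematical: one has to check that Proposition \ref{prop:rationalpsatz} is being applied correctly when $u$ is vector-valued (handled by taking one equality per component of $u$) and that the controller is well-posed at the equilibrium so that the Lyapunov argument is being performed on an open neighbourhood of the origin where $q(z) \neq 0$. Once these points are cleared, the proof reduces to a direct application of Proposition \ref{prop:rationalpsatz} followed by the classical Lyapunov stability theorem.
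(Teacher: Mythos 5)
Your proof is correct and follows essentially the same route as the paper: both observe that the global nonnegativity of $-\frac{\partial V}{\partial z}(f(z)+g(z)u) - (p(z)-q(z)u)$ restricted to the set $\{p(z)-q(z)u=0,\ q(z)\neq 0\}$ (equivalently $u = p(z)/q(z)$) yields $\dot{V}\leq 0$ along the closed loop, and then conclude by Lyapunov's stability theorem for the constrained system. Your added remarks on the vector-valued case and well-posedness of the controller at the origin are sensible bookkeeping that the paper leaves implicit.
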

\begin{proof}
    We consider the stability of constrained dynamical systems as in \cite{apap1}. Using the same argument as in Proposition \ref{prop:rationalpsatz}, if 
    \begin{equation*}
         -\frac{\partial V}{\partial z}(f(z) + g(z)u),
    \end{equation*}
    is nonnegative on the set 
    \begin{equation*}
        \{ z \in \mathbb{R}^{n_{z}}, \: u \in \mathbb{R}^{n_{u}} \: | \: p(z) - q(z)u = 0, \: q(z) \neq 0 \},
    \end{equation*}
    then it is also nonnegative on the set 
    \begin{equation*}
        \bigg\{ z \in \mathbb{R}^{n_{z}}, \: u \in \mathbb{R}^{n_{u}} \: \bigg| \: \frac{p(z)}{q(z)} - u = 0, \: q(z) \neq 0 \bigg\},
    \end{equation*}
    which defines the controller in the closed-loop system.
\end{proof}

To compute the Lyapunov function and polynomials that define the rational controller in Proposition \ref{prop:rationalcontroller} we can formulate an SOS program.
\begin{proposition} \label{prop:rationalcontrollersos}
    Consider the dynamical system \eqref{eq:systemrational} in Proposition \ref{prop:rationalcontroller}. Suppose there exists polynomial functions $V(z), \: p(z), \: q(z)$, a positive definite function $\rho (z)$ such that
    \begin{align*}
       V(z) - \rho(z) &\in \Sigma[z], \\ 
       -\frac{\partial V}{\partial z}(f(z) + g(z)u) - (p(z) - q(z)u) &\in \Sigma[X], \\ 
        p(z) &\in \mathbb{R}[X],\\ 
        q(z) &\in \mathbb{R}[X],\\
        q(z) &\neq 0,
    \end{align*}
where $X = (z,u)$ is a vector of all of the states. Then the origin of the system is a stable equilibrium.
\end{proposition}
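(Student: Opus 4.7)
The plan is to assemble this proposition as a direct composition of the two preceding results (the rational Positivstellensatz in Proposition \ref{prop:rationalpsatz} and the Lyapunov criterion in Proposition \ref{prop:rationalcontroller}), with the SOS conditions serving as the computational certificates that discharge their hypotheses. First I would verify that $V(z)$ is a valid Lyapunov function candidate. The first SOS condition $V(z) - \rho(z) \in \Sigma[z]$ immediately implies $V(z) \geq \rho(z)$ for all $z$. Combined with the assumption that $\rho$ is positive definite, this yields that $V$ is positive definite in a neighborhood of the origin, meeting the first hypothesis of Proposition \ref{prop:rationalcontroller}.

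Next I would discharge the second hypothesis of Proposition \ref{prop:rationalcontroller}, namely the inequality
\begin{equation*}
    -\frac{\partial V}{\partial z}(f(z) + g(z)u) - (p(z) - q(z)u) \geq 0 \quad \forall z, u.
\end{equation*}
This follows directly from the second SOS condition, since any polynomial in $\Sigma[X]$ is nonnegative on all of $\mathbb{R}^{n_z+n_u}$. At this point the hypotheses of Proposition \ref{prop:rationalcontroller} are satisfied verbatim, and invoking it yields that the origin is a stable equilibrium of the closed-loop system defined by $u = p(z)/q(z)$.

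A subtle point worth spelling out is the role of Proposition \ref{prop:rationalpsatz}: the reason the SOS condition above suffices is precisely that, by setting the Psatz multipliers to unity and treating $p(z) - q(z)u$ as the defining polynomial of the constraint, nonnegativity on $\mathbb{R}^{n_z+n_u}$ is strong enough to imply nonnegativity on the rational constraint set $\{ p(z)/q(z) - u = 0, \, q(z) \neq 0 \}$, which is the actual closed-loop invariant manifold. This is exactly the logic already carried out inside Proposition \ref{prop:rationalcontroller}, so no new argument is required here.

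I do not anticipate a genuine obstacle, since the proposition is essentially an SOS-certified instance of Proposition \ref{prop:rationalcontroller}. The only care required is to keep the quantifiers straight: the decision variables of the SOS program are the coefficients of $V, p, q$, while the SOS conditions are universally quantified in $z$ and $u$. Once this is noted, the proof reduces to two lines invoking the preceding propositions.
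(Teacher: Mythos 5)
Your proposal is correct and matches the paper's (implicit) reasoning: the paper states this proposition without an explicit proof, treating it as the immediate SOS-certified instance of Proposition \ref{prop:rationalcontroller}, which is exactly the two-step argument you give (SOS membership implies global nonnegativity, and $V - \rho \in \Sigma[z]$ with $\rho$ positive definite gives positive definiteness of $V$). Nothing further is needed.
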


Theorem \ref{prop:rationalcontrollersos} allows us to reconstruct a stabilising controller from a feasibility test using SOS programming. However, the structure of the controller is limited as it is a simple rational function. We therefore extend this approach to a more expressive class of functions through rational NNs.

\subsection{Extension to Rational Neural Network Controllers} \label{sec:extRNNcont}
The technique outlined in the previous section can be used to recover a stabilising controller that is a rational function of the system states. However, we can expand this approach to consider an NN architecture that contains rational functions similar to the one proposed in \eqref{eq:newrationalnn}. We consider a state feedback controller $u(t) = \pi (z(t)) : \mathbb{R}^{n_{z}} \rightarrow \mathbb{R}^{n_{u}}$ as a rational NN such that
\begin{equation} \label{eq:rationalNNcontroller}
\begin{aligned}
    x^{0}(t) &= z(t), \\
    x_{i}^{k+1}(t) &= \frac{p_{i}^{k}(x^{k}(t))}{q_{i}^{k}(x^{k}(t))} , \: \mathrm{for} \: i = 1, \dots, n_{k}, \: k = 0, \dots , \ell - 1, \\
    u_{i}(t) = \pi_{i} (z(t)) &= \frac{p_{i}^{\ell}(x^{\ell}(t))}{q_{i}^{\ell}(x^{\ell}(t))}, \: \mathrm{for}  \: i = 1, \dots , n_{z},
\end{aligned}
\end{equation}
where $p_{i}^{k}(x^{k}(t))$, $q_{i}^{k}(x^{k}(t))$ are the polynomials that form the rational expression associated with the $i^{\text{th}}$ node in the $(k+1)^{\text{th}}$ layer. The number of neurons in the $k^{\text{th}}$ layer is denoted by $n_{k}$. We will drop the time dependence notation throughout the rest of this paper for simplicity.

We can apply Proposition \ref{prop:rationalcontrollersos} and the theory of constrained dynamical systems as in \cite{apap1} due to the well-defined structure of this controller. The following proposition shows how Lyapunov stability over constrained dynamical systems can be used to recover a controller of this form.

\begin{proposition} \label{prop:rationalNNsos}
    Consider the non-linear system
    \begin{align*}
        \dot{z} &= f(z) + g(z)u, \\
        u &= \pi(z),
    \end{align*}
    where $z \in \mathbb{R}^{n_{z}}$ are the system states, $u \in \mathbb{R}^{n_{u}}$ is the controller input and $f(z)$ and $g(z)$ are polynomials. Consider the controller structure $\pi(z)$ defined in \eqref{eq:rationalNNcontroller} and the region given by \eqref{eq:Dz}. Suppose there exist polynomial functions $V(z)$, $p_{i}^{k}(x^{k}) \: \forall i = 1, \dots, n_{k+1}, \: k = 0, \dots , \ell,$ and $q_{i}^{k}(x^{k}) \: \forall i = 1, \dots, n_{k+1}, \: k = 0, \dots , \ell$ satisfying the following conditions
    \begin{align*} 
    \begin{split}
        V(z) - \rho(z) &\in \Sigma[z], \\
        \rho(z) &> 0, \\
        -\frac{\partial V}{\partial z}(z) (f(z) + g(z)u) &- \sum_{k=1}^{n_d}s_{k}(X)d_{k}(z) \dots \\  
        &- \sum_{k=1}^{\ell} \sum_{i=1}^{n_{k}} \bigg( p_{i}^{k}(x^{k})  - q_{i}^{k}(x^{k}) x_{i}^{k+1} \bigg) \dots \\ &- \sum_{i=1}^{n_{u}} \bigg( p_{i}^{\ell}(x^{\ell}) - q_{i}^{\ell}(x^{\ell}) u_{i} \bigg) \in \Sigma[X], \\
        s_{k}(X) &\in \Sigma[X], \: \forall k = 1, \dots , n_{d}, \\
        q_{i}^{k}(x^{k}) &\neq 0, \: \forall \: i = 1, \dots, n_{k}, \: k = 0, \dots , \ell, 
    \end{split}
    \end{align*}
    where $X$ is a vector of all the system and NN states, i.e. $X = (x,u,z)$. Then the equilibrium of the system is stable.
\end{proposition}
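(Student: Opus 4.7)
The plan is to combine Proposition \ref{prop:rationalpsatz}, applied layerwise to each neuron's defining rational equation, with the Lyapunov stability argument for constrained dynamical systems used in the proof of Proposition \ref{prop:rationalcontrollersos}. The strategy is essentially: (i) read off positive definiteness of $V$ from the first SOS condition, (ii) recognise the big SOS inclusion as a simultaneous Psatz certificate for $-\dot V \ge 0$ on the graph of the rational NN intersected with $D^z$, and (iii) invoke the standard Lyapunov theorem.

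First I would note that $V(z) - \rho(z) \in \Sigma[z]$ with $\rho(z)>0$ forces $V(z)\ge \rho(z)>0$ for $z\ne 0$, so $V$ is positive definite in a neighbourhood of the origin. Next, treat the central SOS condition. Each term $p_i^k(x^k) - q_i^k(x^k)\,x_i^{k+1}$ appears with the trivial multiplier $1 \in \mathbb{R}[X]$, and each $s_k(X)d_k(z)$ appears with a SOS multiplier. By the Psatz (Theorem \ref{psatz1}) this certifies that
\begin{equation*}
-\frac{\partial V}{\partial z}(z)\bigl(f(z)+g(z)u\bigr) \;\ge\; 0
\end{equation*}
on the semi-algebraic set
\begin{equation*}
S \;=\; \Bigl\{ X=(x,u,z) \;:\; p_i^k(x^k) - q_i^k(x^k)\,x_i^{k+1}=0\ \forall i,k,\ \ p_i^\ell(x^\ell)-q_i^\ell(x^\ell)u_i=0\ \forall i,\ \ d_k(z)\ge 0\ \forall k \Bigr\}.
\end{equation*}

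Applying the non-vanishing hypothesis $q_i^k(x^k)\ne 0$ to each equality exactly as in the proof of Proposition \ref{prop:rationalpsatz}, I can rewrite the constraints layer by layer as $x_i^{k+1} = p_i^k(x^k)/q_i^k(x^k)$ and $u_i = p_i^\ell(x^\ell)/q_i^\ell(x^\ell)$. Starting from $x^0=z$, these successively define $x^1, x^2, \dots, x^\ell$ and finally $u=\pi(z)$, so the projection of $S$ onto $(z,u)$ is precisely the graph of the rational NN controller \eqref{eq:rationalNNcontroller} intersected with $D^z$. Therefore along any closed-loop trajectory lying in $D^z$ we have $\dot V(z) = \frac{\partial V}{\partial z}(z)\bigl(f(z)+g(z)\pi(z)\bigr) \le 0$, and Lyapunov's theorem (in the constrained-dynamics form used in \cite{apap1}) gives stability of the origin.

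The main obstacle I anticipate is bookkeeping rather than any deep technicality: one must check that the non-vanishing condition is propagated consistently from layer $0$ to layer $\ell$ so that every intermediate $x^{k+1}$ is well-defined, and that the ambient variable list $X=(x,u,z)$ is large enough for every neuron equation to sit in $\mathbb{R}[X]$ so that the trivial multiplier argument of Proposition \ref{prop:rationalpsatz} really applies simultaneously to all of them. A secondary point to handle carefully is that stability is only claimed locally, so the conclusion should be read relative to the region $D^z$ where the SOS multipliers $s_k$ activate the inequality constraints; outside $D^z$ no stability claim is made, which matches the role of $D^z$ in Proposition \ref{prop:nflstability}.
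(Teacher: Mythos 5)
Your proposal is correct and follows essentially the route the paper intends: the paper states this proposition without a separate proof, relying on the same layerwise application of Proposition \ref{prop:rationalpsatz} (unit multipliers on the neuron equalities, SOS multipliers on the region constraints, then division by the non-vanishing denominators) combined with the Lyapunov argument for constrained dynamical systems from Proposition \ref{prop:rationalcontrollersos} and \cite{apap1}. Your added remarks on propagating the non-vanishing condition through the layers and on the local nature of the claim relative to $D^{z}$ are consistent with the paper's treatment.
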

The above proposition can generate a feasible SOS program, however the rational NN controller may be difficult to recover. This is because the coefficients in the $q_{i}^{k}(x^{k})$ terms will be set to very small values by the SOS program, due to each term cancelling with the adjacent layers. We therefore propose an alternative rational NN controller structure in the following section to mitigate this issue.

\subsection{Refined Rational Neural Network Controller}
The Lyapunov condition in Proposition \ref{prop:rationalNNsos} may result in numerical issues when solving the SOS program due to the structure of the constraints, making the rational NN controller difficult to recover. To overcome this issue, we enrich the NN structure by considering a state feedback controller $u = \pi (z) : \mathbb{R}^{n_{z}} \rightarrow \mathbb{R}^{n_{u}}$ as a rational NN such that
\begin{equation} \label{eq:modifitedrationalNNcont}
\begin{aligned}
    y^{0} &= z, \\
    x_{i}^{k+1} &= \frac{\sum_{j=1}^{n_{k}}p_{i,j}^{k}(y^{k})y_{j}^{k}}{q_{i}^{k}(z)} , \: \mathrm{for} \: i = 1, \dots, n_{k+1}, \: k = 0, \dots , \ell - 1, \\
    y_{i}^{k+1} &= x_{i}^{k+1} + 1 , \: \mathrm{for} \: i = 1, \dots, n_{k}, \: k = 0, \dots , \ell - 1, \\
    u_{i} = \pi_{i} (z) &=  \frac{\sum_{j=1}^{n_{\ell}}p_{i,j}^{\ell}(y^{\ell})y_{j}^{\ell}}{q_{i}^{\ell}(z)} \Big(\sum_{m = 1}^{n_{z}} z_{m}^{2} \Big), \: \mathrm{for} \: i = 1, \dots , n_{z},
\end{aligned}
\end{equation}
where $p_{i,j}^{k}(x^{k})$, $q_{i}^{k}(z)$ are the  $j^{\text{th}}$ polynomials that form the rational expression associated with the $i^{\text{th}}$ node in the $(k+1)^{\text{th}}$ layer. 

Each $x_{j}^{k}$ node in the NN contains a rational activation function and each $y_{j}^{k}$ node is equal to the $x_{j}^{k}$ term with the addition of a bias term which we set to unity. The $y_{j}^{k}$ term that appears in the numerator of the rational activation function is to ensure that all terms in the $x_{i}^{k+1}$ node are a function of all of the nodes in the $k^{\text{th}}$ layer and to impose more structure on the NN. The denominator $q_{i}^{k}(z)$ is a function of the system states to ensure that all nodes in the network are tied to the system states and not just the nodes in the previous layer. This will ensure that the SOS program does not set the coefficients in the $q_{i}^{k}(z)$ terms to very small values. The final layer contains a multiplier term which is a sum of all of the system states $\sum_{m = 1}^{n_{z}} z_{m}^{2}$ to ensure that the controller input goes to zero at the origin.

We can then adapt Proposition \ref{prop:rationalNNsos} for this modified rational NN structure, to obtain a controller that can be recoverable from the feasibility test.

\begin{proposition} \label{prop:sosmodifiedNNcont}
    Consider the non-linear system
    \begin{align*}
        \dot{z} &= f(z) + g(z)u, \\
        u &= \pi(z),
    \end{align*}
    where $z \in \mathbb{R}^{n_{z}}$ are the system states, $u \in \mathbb{R}^{n_{u}}$ is the controller input and $f(z)$ and $g(z)$ are polynomials. Consider the controller structure $\pi(z)$ defined in \eqref{eq:modifitedrationalNNcont} and the region given by \eqref{eq:Dz}. Suppose there exist polynomial functions $V(z)$, $p_{i,j}^{k}(x^{k}) \: \forall i = 1, \dots, n_{k+1}, \: j = 1, \dots, n_{k}, \: k = 0, \dots , \ell$ and $q_{i}^{k}(z) \: \forall i = 1, \dots, n_{k+1}, \: k = 0, \dots , \ell$ satisfying the following conditions
    \begin{align*} 
    \begin{split}
        V(z) - \rho(z) &\in \Sigma[z], \\
        \rho(z) &> 0, \\
        -\frac{\partial V}{\partial z}(z) (f(z) + g(z)u) &- \sum_{k=1}^{n_d}s_{k}(X)d_{k}(z) \dots \\  
        &- \sum_{k=1}^{\ell} \sum_{i=1}^{n_{k}} \bigg( \bigg( \sum_{j=1}^{n_{k-1}} p_{i,j}^{k}(y^{k})y_{j}^{k} \bigg)  - q_{i}^{k}(z) x_{i}^{k+1} \bigg) \dots \\ 
        &- \sum_{k=1}^{\ell} \sum_{i=1}^{n_{k}} t_{i,k}(X) (y_{i}^{k} - x_{i}^{k} - 1) \dots \\
        &- \sum_{i=1}^{n_{u}} \bigg(  \sum_{j=1}^{n_{\ell}}  \bigg( p_{i,j}^{\ell}(y^{\ell})y_{j}^{\ell} \bigg( \sum_{m=1}^{n_{z}} z_{m}^{2}\bigg) \bigg) - q_{i}^{\ell}(z) u_{i} \bigg)  \in \Sigma[X], \\
        s_{k}(X) &\in \Sigma[X], \: \forall k = 1, \dots , n_{d}, \\
        t_{i,k}(X) &\in \mathbb{R}[X], \: \forall k = 1, \dots, \ell, \: i = 1, \dots, n_{k}, \\
        q_{i}^{k}(x^{k}) &\neq 0, \: \forall \: i = 1, \dots, n_{k}, \: k = 0, \dots , \ell,
    \end{split}
    \end{align*}
    where $X$ is a vector of all the system and NN states, i.e. $X = (u,x,y,z)$. Then the equilibrium of the system is stable.
\end{proposition}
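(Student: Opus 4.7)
The plan is to prove Proposition \ref{prop:sosmodifiedNNcont} by mirroring the argument for Proposition \ref{prop:rationalNNsos}, invoking the rational Positivstellensatz of Proposition \ref{prop:rationalpsatz} to convert the SOS certificate into nonnegativity of $-\frac{\partial V}{\partial z}(f(z)+g(z)u)$ on the semi-algebraic set that describes the graph of the refined rational neural network controller \eqref{eq:modifitedrationalNNcont}, and then closing the argument with Lyapunov stability for constrained dynamical systems as in \cite{apap1}.

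First I would identify, layer by layer, the equality constraints that cut out the closed-loop trajectories in the extended state space $X=(u,x,y,z)$. Each hidden node gives the rational equality $x_i^{k+1} - \frac{\sum_j p_{i,j}^k(y^k) y_j^k}{q_i^k(z)} = 0$, which, when $q_i^k(z) \neq 0$, is equivalent to the polynomial equality $\sum_j p_{i,j}^k(y^k)y_j^k - q_i^k(z) x_i^{k+1} = 0$; these appear in the hypothesised SOS expression with the unit multiplier that Proposition \ref{prop:rationalpsatz} requires. The affine bias relations $y_i^k - x_i^k - 1 = 0$ are polynomial equalities and are therefore handled by the standard Psatz multipliers $t_{i,k}(X) \in \mathbb{R}[X]$. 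The output layer yields a single rational equality per input channel, in which the factor $\sum_{m=1}^{n_z} z_m^2$ is absorbed into the numerator polynomial so that the rational-Psatz reformulation carries through verbatim. Finally, the region constraints $d_k(z) \geq 0$ from \eqref{eq:Dz} are certified by the SOS multipliers $s_k(X) \in \Sigma[X]$.

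Adding all these contributions to $-\frac{\partial V}{\partial z}(f(z)+g(z)u)$ and requiring that the sum lie in $\Sigma[X]$ is precisely the hypothesis of the proposition; by Proposition \ref{prop:rationalpsatz} applied to the rational equalities, and by the standard Psatz for the polynomial equality and inequality pieces, this certifies $-\frac{\partial V}{\partial z}(f(z)+g(z)u) \geq 0$ on the intersection of $D^z$ with the graph of the NN. Combining this with the positive definiteness of $V$ delivered by $V(z) - \rho(z) \in \Sigma[z]$ and $\rho(z) > 0$, and noting that the output factor $\sum_{m=1}^{n_z} z_m^2$ forces $\pi(0)=0$ so that the origin is indeed an equilibrium of the closed-loop system, the Lyapunov theorem for constrained dynamical systems delivers stability of the origin.

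The step I expect to be the main obstacle is a bookkeeping one: verifying that the telescoping sum of rational-Psatz terms across all layers correctly encodes the full input-output graph of the refined rational network, particularly where the bias equalities $y_i^k - x_i^k - 1 = 0$ interact with the numerator terms $p_{i,j}^k(y^k) y_j^k$ that live in the next layer's certificate. This is purely algebraic and follows by the same layer-wise substitution used in Proposition \ref{prop:rationalNNsos}, but one must be careful that the $t_{i,k}(X)$ multipliers on the bias equalities are left free in $\mathbb{R}[X]$ so they can absorb the induced cross terms, while the multipliers on the rational equalities are fixed at unity as required by Proposition \ref{prop:rationalpsatz}; the fact that $\sum_{m=1}^{n_z} z_m^2$ vanishes only at $z=0$ is what rules out spurious closed-loop equilibria away from the origin, which is essential for the Lyapunov conclusion to apply.
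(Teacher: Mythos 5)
Your proposal is correct and follows essentially the same route the paper takes: it is the layer-wise instantiation of Proposition \ref{prop:rationalpsatz} (unit multipliers on the rational equalities, free $t_{i,k}$ on the bias equalities, SOS multipliers on $d_k$), combined with Lyapunov stability for constrained dynamical systems, exactly as in Propositions \ref{prop:rationalcontroller}--\ref{prop:rationalNNsos}. The only quibble is your closing remark: the factor $\sum_m z_m^2$ ensures $\pi(0)=0$ so that the origin is an equilibrium, but it does not rule out other closed-loop equilibria, nor is that needed for the (local) stability conclusion.
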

The SOS program in the above proposition is convex in the rational NN parameters and can hence be solved using SOS programming. Note that saturation and any uncertainty and robustness conditions can be incorporated in the same way that is demonstrated in \cite{mnew5}. Proposition \ref{prop:sosmodifiedNNcont} presents a method to obtain a stabilising NN controller for a non-linear polynomial system in a convex way by solving one SOS optimisation problem. As described in Section \ref{sec:rationalNNintro}, other recent approaches such as \cite{mever4,pdon,hyin2,fagu,njun} rely on iterative algorithms or reinforcement learning formulations that are significantly more expensive to compute.

\section{Numerical Examples}
We demonstrate how the approach in Proposition \ref{prop:sosmodifiedNNcont} can be used to recover stabilising rational NN controllers through numerical examples. These examples were run on a four-core Intel Xeon processor @3.50GHz with 16GB of RAM. The SOS programs were implemented using MATLAB and SOSTOOLS to parse the SOS constraints into an SDP, which is solved using MOSEK \cite{mosek}.

\subsection{One Dimensional System} \label{sec:rnneg1D1}
To show that this method is able to obtain a stabilising rational NN controller, we consider a very simple one dimensional linear system of the form
\begin{equation*}
    \dot{z} = z + u.
\end{equation*}
This system is unstable without a feedback controller. To recover a stabilising controller we set the size of the rational NN to be a small two layer network with a single node in each layer. The equations of the controller can be written as
\begin{align*}
    x_{1} &= \frac{\lambda_{1,1} z^{4} + \lambda_{1,2}z^{2}}{\gamma_{1,1}z^2 + \gamma_{1,2}}, \\
    y_{1} &= x_{1} + 1, \\
    u &= \frac{(\lambda_{2,1}y_{1}^{2} + \lambda_{2,2}y_{1}) z}{\gamma_{2,1}z^{2} + \gamma_{2,2}},
\end{align*}
where $\gamma_{1,1} \geq 0$, $\gamma_{1,2} > 0$, $\gamma_{2,1} \geq 0$, $\gamma_{2,2} > 0$. We also add saturation to the controller such that $-10 \leq u \leq 10$ and we enforce the local region of the state space to be $-10 \leq z \leq 10$. The SOS program is able to recover a feasible controller which can stabilise the system to the zero equilibrium. The state trajectory and controller input for this NFL over time are shown in Figure \ref{fig:OneDSymsStates} and \ref{fig:OneDSymsControl} respectively.

\begin{figure}[h] 
    \centering  
    \includegraphics[height=8cm]{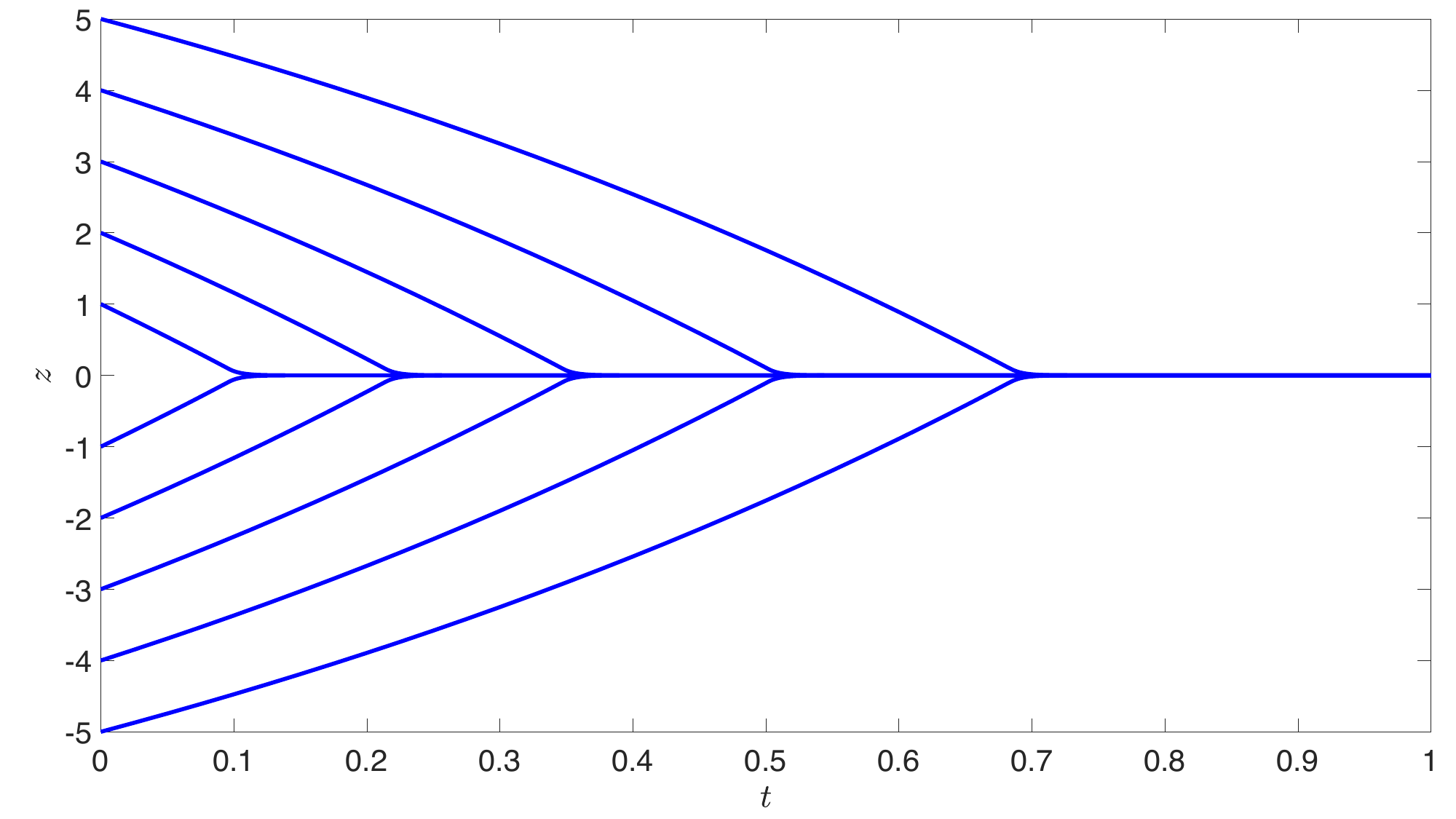}
    \caption{Plot showing the trajectories of the system state in Section \ref{sec:rnneg1D1} over time.} \label{fig:OneDSymsStates}
\end{figure}

\begin{figure}[h] 
    \centering  
    \includegraphics[height=8cm]{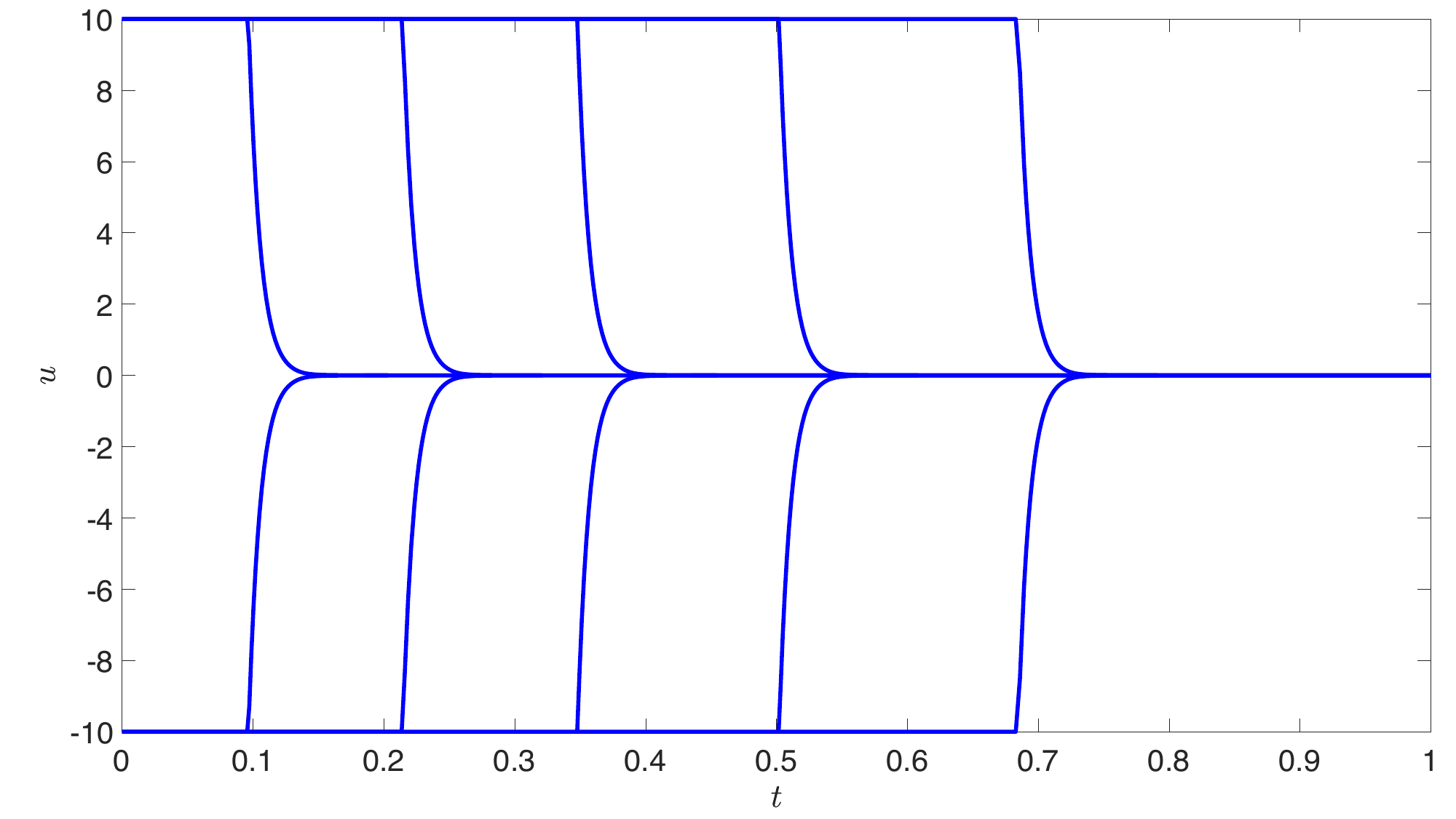}
    \caption{Plot showing the controller input of the system in Section \ref{sec:rnneg1D1} over time.} \label{fig:OneDSymsControl}
\end{figure}

\subsection{Three Dimensional Non-linear System} \label{sec:3Dpolyrationaleg}
We now consider a three dimensional non-linear system system given by
\begin{align*}
    \dot{z}_{1} &= -z_{1} + z_{2} - z_{3}, \\
    \dot{z}_{2} &= -z_{1}(z_{3} + 1) - z_{2}, \\
    \dot{z}_{3} &= -z_{1} + u,
\end{align*}
and attempt to find a stabilising rational NN controller for the system. We set the NN to have two layers and three nodes in each layer, with saturation $-10 \leq u \leq 10$. The system states are defined to operate in the region
\begin{equation*}
    1^2 - z_{1}^{2} - z_{2}^{2} - z_{3}^{2} \geq 0. 
\end{equation*}
Each polynomial in the rational NN is assigned to contain zeroth to fourth order terms. We define a quartic Lyapunov function and second order polynomials for the $s_{k}$ and $t_{i,k}$ terms. The trajectories for this system are shown in Figure \ref{fig:3DSymsStates}, which shows that the controller can successfully stabilise the system. 

\begin{figure}[h] 
    \centering  
    \includegraphics[height=8cm]{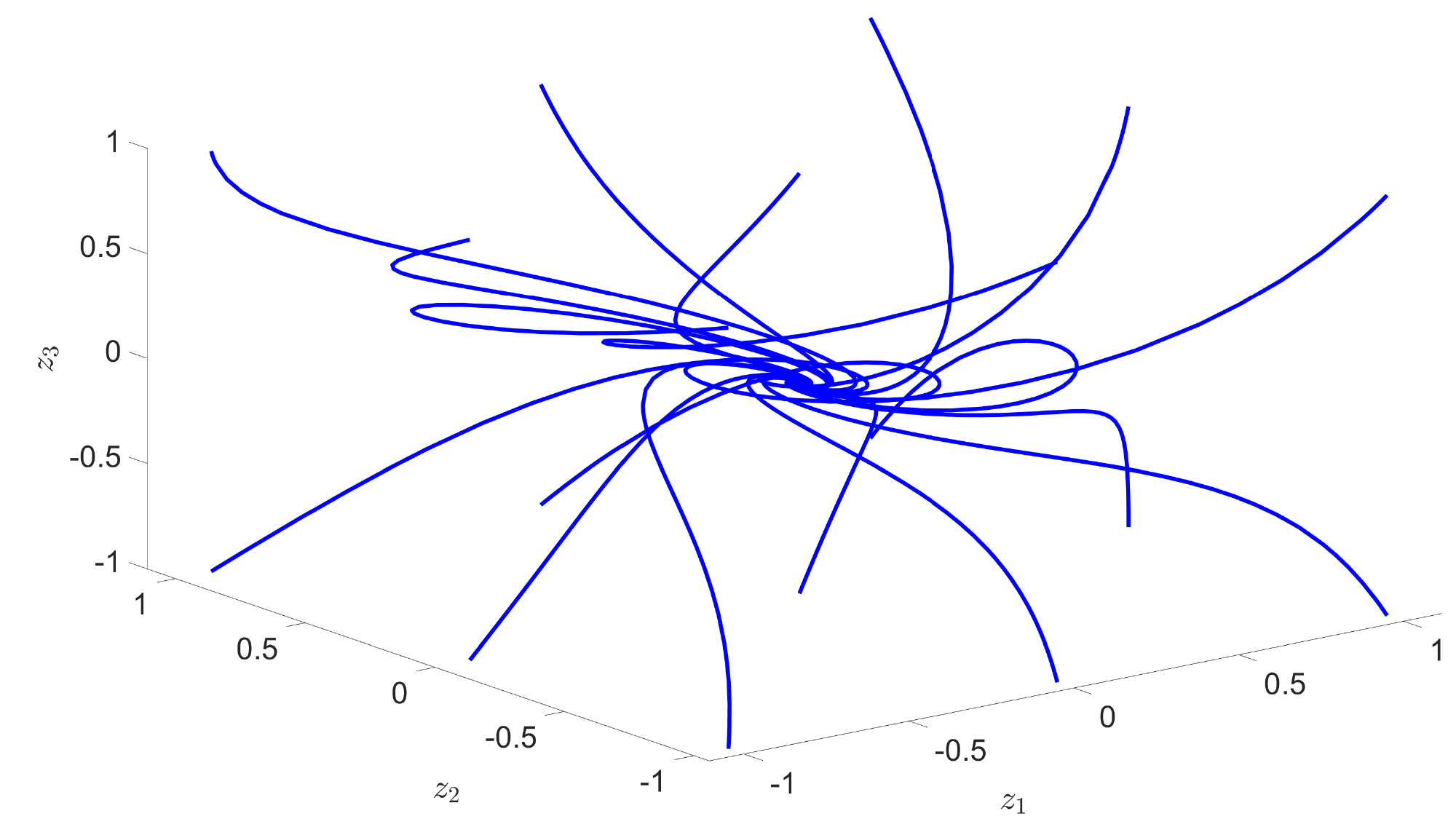}
    \caption{Plot showing the trajectories of the system states for the example in Section \ref{sec:3Dpolyrationaleg}.} \label{fig:3DSymsStates}
\end{figure}

\subsection{Non-linear Inverted Pendulum} \label{sec:IPrationaleg}
We now consider the inverted pendulum proposed in \cite{hyin2} with dynamics given by
\begin{gather*}
    \ddot{\theta}(t) = \frac{mgl\sin{(\theta(t))} - \mu \dot{\theta}(t) + \mathrm{sat}(u(t))}{ml^2}.
\end{gather*}
As shown in \cite{apap1}, we can rewrite the dynamics as a four dimensional polynomial system. We let $z_{1} = \theta$, $z_{2} = \dot{\theta}$ and by making the substitution $z_{3} = \mathrm{sin}(z_{1})$, $z_{4} = \mathrm{cos}(z_{1})$ the system can be written as
\begin{align*}
    \dot{z}_{1} &= z_{2}, \\
    \dot{z}_{2} &= \frac{g}{l}z_{3} - \frac{\mu}{ml^2} z_{2} + \frac{1}{ml^2} u, \\
    \dot{z}_{3} &= z_{2}z_{4}, \\
    \dot{z}_{4} &= -z_{2}z_{3},
\end{align*}
where $m=0.15 \: \mathrm{kg}, \: l=0.5 \: \mathrm{m}, \: \mu=0.5 \: \mathrm{Nmsrad}^{-1}, \: g=9.81 \: \mathrm{ms}^{-2}$ and the controller input is saturated such that $-1 \leq u \leq 1$. The system also requires the equality constraint
\begin{equation*}
    z_{3}^{2} + z_{4}^{2} - 1 = 0,
\end{equation*}
to be enforced. We do not define any region of the state space and instead consider global stability. We include the additional robustness constraints on the length of the pendulum to be $\pm 0.1$ its original length and additive white noise $w$ on the angular velocity such that $|| w ||_{\infty} \leq 0.1$. By making the substitution $\delta = 1/l$ the full dynamical system can be written as
\begin{align*}
    \dot{z}_{1} &= z_{2}, \\
    \dot{z}_{2} &= g \delta z_{3} -  \frac{\mu \delta^2}{m}z_{2} + \frac{\delta^2}{m} u + w, \\
    \dot{z}_{3} &= z_{2}z_{4}, \\
    \dot{z}_{4} &= -z_{2}z_{3}, \\
    0 &= z_{3}^{2} + z_{4}^{2} - 1, \\
    0 &\leq 1^2 - u^2, \\
    0 &\leq 0.1^2 - w^2, \\
    0 &\leq (1/0.4 - \delta)(\delta - 1/0.6).
\end{align*}

The Lyapunov function must be carefully constructed due to the $z_{4}$ state being equal to one at the origin. We therefore define the Lyapunov function to be the sum of two Lyapunov functions, the first of which is defined as
\begin{equation*}
    V_{1}(z_{3},z_{4}) = a_{1}z_{3}^{2} + a_{2}z_{4}^{2} + a_{3}z_{4} + a_{4}
\end{equation*}
and the second is quadratic in $z_{1}$ and $z_{2}$. To ensure that the Lyapunov function is zero at the origin we must enforce
\begin{equation*}
    a_{2} + a_{3} + a_{4} = 0.
\end{equation*}
To ensure that the Lyapunov function is positive definite, we define 
\begin{equation*}
    \rho(z) = \epsilon_{1}z_{1}^{2} + \epsilon_{2}z_{2}^{2} + \epsilon_{3}(1 - z_{4}),
\end{equation*}
where $\epsilon_{1} \geq 0.1$, $\epsilon_{2} \geq 0.1$, $\epsilon_{3} \geq 0.1$.

The size of the rational NN controller is a two layer network with four nodes in each layer. By setting the size of the polynomials in the network to be between zeroth and fourth order, we are able to recover a controller. The trajectories for this NFL are shown in Figure \ref{fig:IPrationalSymsStates}. We can see that the controller initially drives the system states towards a manifold and then moves it towards the equilibrium. Since the control system is discontinuous at the manifold, further analysis is required to show stability of the system.

\begin{figure}[h!] 
    \centering  
    \includegraphics[height=8cm]{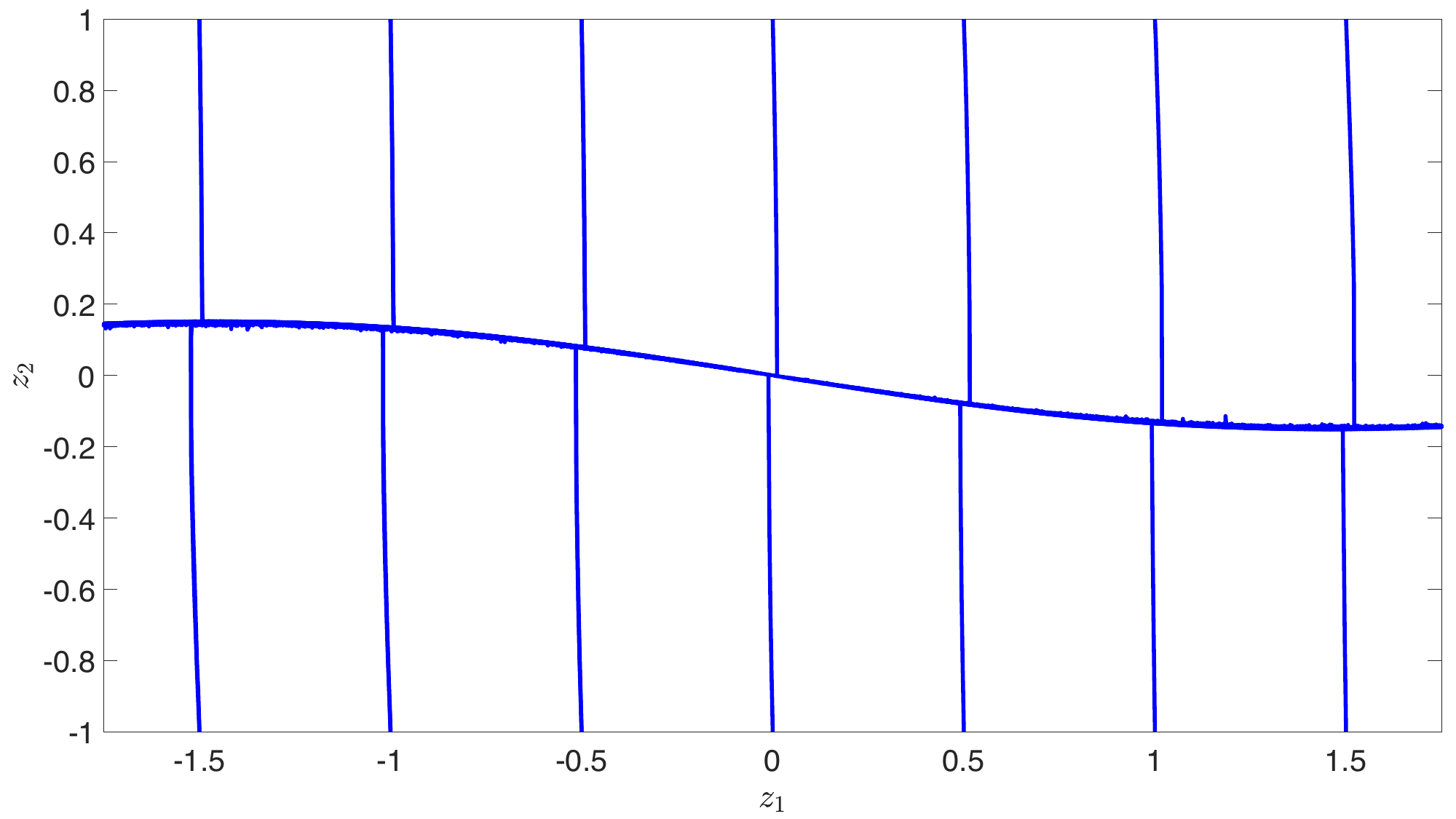}
    \caption{Plot showing the trajectories of the system states for the example in Section \ref{sec:IPrationaleg}.} \label{fig:IPrationalSymsStates}
\end{figure}

\section{Conclusion}
In this paper, we analyse the use of rational NNs in previous application areas. We present novel rational activation functions to approximate the traditional sigmoid and tanh functions and show how they can be used in robustness problems for NFLs. We argue that rational activation functions can be replaced with a general rational NN structure where each layer is convex in the NN's parameters. We then propose a method to recover a stabilising controller from a feasibility test and then extend this approach to rational NNs. This structure is refined to make it more compatible when used in conjunction with SOS programming. Through numerous numerical examples we show how this approach can be used to recover stabilising rational NN controllers for NFLs with non-linear plants with noise and parametric uncertainty.

\section*{Acknowledgements}
This work was supported by EPSRC grants EP/L015897/1 (to M. Newton) and EP/M002454/1 (to A. Papachristodoulou) and the Tony Corner Research Fund. 

\printbibliography

\end{document}